\newcommand{\eps}{\varepsilon}
\DeclareMathOperator*{\E}{E}
\DeclareMathOperator*{\Var}{Var}
\DeclareMathOperator*{\Varht}{Var_{\mathrm{HH}}}
\newcommand{\calP}{\mathcal{P}} 
\newcommand{\calQ}{\mathcal{Q}} 
\renewcommand{\epsilon}{\eps}   
\newcommand{\usize}{N}          
\newcommand{\esize}{n}          
\newcommand{\ssize}{m}          
\newcommand{\estimator}{\zeta}  
\newcommand{\htest}{\mu_{\textrm{HH}}} 
\renewcommand{\th}{{}^{\mathrm{th}}}
\newcommand{\set}[1]{\left\{#1\right\}}
\newcommand{\sucht}{\,\colon\,}
\newcommand{\paren}[1]{\left(#1\right)}
\newcommand{\abs}[1]{\left|#1\right|}
\newtheorem{theorem}{Theorem}
\newtheorem{corollary}[theorem]{Corollary}
\newtheorem{lemma}[theorem]{Lemma}
\theoremstyle{definition}
\newtheorem{definition}[theorem]{Definition}
\newtheorem{example}[theorem]{Example}
\theoremstyle{remark}
\date{}
\title{
Bias Reduction for Sum Estimation
}
\author{
    Talya Eden\\
	\texttt{talyaa01@gmail.com}\\
	Bar-Ilan University
	\and
	Jakob Bæk Tejs Houen\\
	\texttt{jakob@tejs.dk}\\
	BARC, Univ. of Copenhagen
	
	\and
	Shyam Narayanan\\
	\texttt{shyamsn@mit.edu}\\
	MIT
	    
	\and
        Will Rosenbaum\\
	\texttt{wrosenbaum@amherst.edu}\\
	Amherst College
	\and
	Jakub Tětek\\
	\texttt{j.tetek@gmail.com}\\
	BARC, Univ. of Copenhagen
}
\begin{document}
\maketitle

\begin{abstract}
  In classical statistics and distribution testing, it is often assumed that elements can be sampled exactly from some distribution $\calP$, and that when an element $x$ is sampled, the probability $\calP(x)$ of sampling $x$ is also known. In this setting, recent work in distribution testing has shown that many algorithms are robust in the sense that they still produce correct output if the elements are drawn from any distribution $\calQ$ that is sufficiently close to $\calP$. This phenomenon raises interesting questions: under what conditions is a ``noisy'' distribution $\calQ$ sufficient, and what is the algorithmic cost of coping with this noise?

  In this paper, we investigate these questions for the problem of estimating the sum of a multiset of $\usize$ real values $x_1, \ldots, x_\usize$. This problem is well-studied in the statistical literature in the case $\calP = \calQ$, where the Hansen-Hurwitz estimator [Annals of Mathematical Statistics, 1943] is frequently used. We assume that for some (known) distribution $\calP$, values are sampled from a distribution $\calQ$ that is pointwise close to $\calP$. That is, there is a parameter $\gamma < 1$ such that for all $x_i$, $(1 - \gamma) \calP(i) \leq \calQ(i) \leq (1 + \gamma) \calP(i)$. 
  For every positive integer $k$ we define an estimator $\zeta_k$ for $\mu = \sum_i x_i$ whose bias is proportional to $\gamma^k$ (where our $\zeta_1$ reduces to the classical Hansen-Hurwitz estimator).
  As a special case, we show that if $\calQ$ is pointwise $\gamma$-close to uniform
  and all $x_i \in \{0, 1\}$, for any $\eps > 0$,  we can estimate $\mu$ to within additive error $\eps \usize$ using $m = \Theta({\usize^{1-\frac{1}{k}} / \eps^{2/k}})$ samples, where $k = \left\lceil (\lg \eps)/(\lg \gamma)\right\rceil$. We then show that this sample complexity is essentially optimal. Interestingly, our upper and lower bounds show that the sample complexity need not vary uniformly with the desired error parameter $\eps$: for some values of $\eps$,  perturbations
  in its value have no asymptotic effect on the sample complexity, while for other values, any decrease in its value results in an asymptotically larger sample complexity.

\end{abstract}


 \thispagestyle{empty}
\newpage
\clearpage
\setcounter{page}{1}

\section{Introduction}
\label{sec:introduction}
Consider the following simple problem. Let us have values $x_i \in \set{0, 1}$ for $i \in [\usize]$ and assume we may sample $i$ from a distribution $\calQ$ that is pointwise $\gamma$-close to uniform (see Definition~\ref{dfn:pointwise}). 
It is easy to obtain an additive $\pm \gamma N$ approximation of the number of $1's$. But is it possible to get a better approximation using a number of samples that is sub-linear in $\usize$? We answer this question positively. Specifically, we solve a more general sum estimation problem, with the above problem being the simplest application. Additionally, we derive lower bounds showing that for a wide range of parameters, the sample complexity of our algorithm is asymptotically tight.

In full generality, estimating the sum of a (multi)set of numbers is a fundamental problem in statistics, and the problem plays an important role in the design of efficient algorithms for large datasets.
The basic problem can be formulated as follows: given a multiset of $\usize$ elements, $S = \set{x_1, x_2, \ldots, x_\usize}$, compute an estimate of the sum $\mu = \sum_{i \in [\usize]} x_i$.


Assume that the values $x_i$ can be sampled according to some probability distribution $\calQ$ over $S$ (equivalently, over $[\usize]$). The classical work of Hansen and Hurwitz~\cite{hansen1943theory} examines the setting in which, when an element $x \in S$ is sampled, the probability $\calQ(x)$ can be determined. They introduce the Hansen-Hurwitz estimator defined by
\begin{equation}
  \label{eqn:hansen-hurwitz}
  \htest = \frac{1}{\ssize} \sum_{j = 1}^{\ssize} \frac{X_j}{\calQ(X_j)}
\end{equation}
where $X_1, X_2, \ldots, X_\ssize$ are samples taken from the distribution $\calQ$. This estimator has been used extensively (though often implicitly) in sublinear algorithms \cite{cohen2014algorithms,Tetek2022-edge,berenbrink2014estimating,holm2022massively}. Hansen and Hurwitz prove that~(\ref{eqn:hansen-hurwitz}) is an accurate estimator of $\mu$ via the following theorem:

\begin{theorem}[Hansen \& Hurwitz, 1943~\cite{hansen1943theory}]
  \label{thm:hansen-hurwitz}
  The value $\htest$ is an unbiased estimator of the sum $\mu$ (i.e., $\E(\htest) = \mu$) and its variance is
  \begin{equation}
    \label{eqn:ht-var}
      \Var[\htest] = \frac{1}{\ssize}\sum_{i = 1}^\usize \calQ(i) \paren{\frac{x_i}{\calQ(i)} - \mu}^2.
  \end{equation}
\end{theorem}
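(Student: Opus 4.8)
The plan is to reduce the whole statement to a one-sample computation and then invoke independence. Write $X$ for a single random draw from $\calQ$, so that $X_1, \dots, X_\ssize$ are i.i.d.\ copies of $X$, and set $Y = X / \calQ(X)$ (with the usual mild abuse that $X_j$ denotes both the sampled index and the associated value $x_{X_j}$). Then $\htest = \frac{1}{\ssize} \sum_{j=1}^{\ssize} Y_j$ where the $Y_j$ are i.i.d.\ copies of $Y$. I would work under the standing assumption that $\calQ(i) > 0$ whenever $x_i \neq 0$ (otherwise $\mu$ is not even identifiable from samples), so that $Y$ is well defined, say with the convention $0/0 = 0$.

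First I would establish unbiasedness by computing $\E[Y]$ directly from the definition of expectation: $\E[Y] = \sum_{i=1}^{\usize} \calQ(i) \cdot \frac{x_i}{\calQ(i)} = \sum_{i=1}^{\usize} x_i = \mu$, where the $\calQ(i)$ factors cancel termwise. Unbiasedness of $\htest$ then follows immediately from linearity of expectation: $\E[\htest] = \frac{1}{\ssize} \sum_{j=1}^{\ssize} \E[Y_j] = \frac{1}{\ssize} \cdot \ssize \cdot \mu = \mu$.

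For the variance, I would use that the $Y_j$ are independent, so $\Var[\htest] = \frac{1}{\ssize^2} \sum_{j=1}^{\ssize} \Var[Y_j] = \frac{1}{\ssize} \Var[Y]$. Since $\E[Y] = \mu$ by the previous step, $\Var[Y] = \E\!\paren{(Y - \mu)^2} = \sum_{i=1}^{\usize} \calQ(i) \paren{\frac{x_i}{\calQ(i)} - \mu}^2$, again expanding the expectation over the draw of $X$. Dividing by $\ssize$ gives exactly the formula in \eqref{eqn:ht-var}.

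I do not expect a genuine obstacle here: the statement is a textbook first- and second-moment calculation for an importance-sampling / Horvitz–Thompson–type estimator, and the only points that require a little care are the notational identification of the sampled index with its value and the support condition on $\calQ$, neither of which affects the argument. The real work of the paper will be in leveraging and refining this estimator when $\calQ$ is only approximately known, not in reproving this classical fact.
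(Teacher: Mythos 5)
Your proof is correct and is the standard argument; the paper itself states this classical result with a citation to Hansen--Hurwitz and gives no proof, so there is nothing to diverge from. The one-sample reduction, termwise cancellation of $\calQ(i)$ for the mean, and independence for the variance are exactly the intended (textbook) derivation, and your support caveat is a reasonable housekeeping remark.
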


Theorem~\ref{thm:hansen-hurwitz} can be applied, to obtain probabilistic guarantees for estimating $\mu$ via sampling. For example, if one wishes to compute a $1 \pm \eps$ multiplicative estimate of $\mu$ with probability $1 - \delta$, by the Chebyshev inequality, it suffices to take $\ssize$ sufficiently large that $\Var[\htest] / (\eps^2 \mu^2) < \delta$. 

In practice, it may however be unreasonable to assume that the probability distribution from which elements are sampled is known precisely. For example, the underlying process generating the samples may be noisy or may induce some underlying bias. We model this situation by assuming that the true sampling distribution $\calQ$ is close to some known distribution $\calP$. When an element $x$ is sampled, the probability $\calP(x)$ can be determined, but not the true probability $\calQ(x)$. We assume that $\calQ$ is \emph{pointwise close} to $\calP$ in the following sense.

\begin{definition}
  \label{dfn:pointwise}
  Let $\calP$ and $\calQ$ be probability distributions over a (multi)set $S$. Then for any $\gamma < 1$, we say that $\calQ$ is \emph{pointwise $\gamma$-close} to $\calP$ if for every $x \in S$, we have
  \begin{equation}
    \label{eqn:pointwise}
    (1 - \gamma) \calP(x) \leq \calQ(x) \leq (1 + \gamma) \calP(x).
  \end{equation}
\end{definition}

Given the situation above, one can apply the Hansen-Hurwitz estimator~(\ref{eqn:hansen-hurwitz}) with the known distribution $\calP$ in place of the true sample distribution $\calQ$. We define the \emph{positive sum}, $\mu_+$ to be
\begin{equation}
  \label{eqn:pos-sum}
  \mu_+ = \sum_{x \in S} \abs{x}.
\end{equation}
It is straightforward to show that the resulting estimator has bias at most $\gamma\mu_+$, and its variance increases by a factor of $1 + O(\gamma)$. 
However, the parameter $\gamma$ may be too large to guarantee the desired error in the estimate of $\mu$. 
For the above problem of estimating the sum of $0$-$1$ values, this would lead to error of $\gamma N$, while we want error of $\epsilon N$ for $\epsilon < \gamma$.\footnote{It is possible to get tighter bounds if parameterizing also by the sum, but for simplicity, we choose to parameterize the error only by $N,\epsilon,$ and $\gamma$.}

Our setting is closely related to recent work in distribution testing. For example, it has been noted that many algorithms that rely on a probability oracle are ``robust'' in the sense that we may do distribution testing to within $\epsilon$ if the oracle's answers have relative error of, say, $1\pm\epsilon/3$ \cite{onak2018-probability, canonne2014testing}.
Our work goes further in the sense that our estimators work also in the setting when the error in the oracle's answers is greater than the desired error parameter $\epsilon$. Specifically, our goal is to characterize the (sample) complexity of a task as a function the oracle error parameter $\gamma$ and a desired approximation parameter $\eps$. This can also be seen as a generalization of the learning-augmented distribution testing setting where $\gamma$ is assumed to be constant~\cite{eden2021-learning}.

\subsection{Our Contributions}

In this paper, our goal is to estimate the sum $\mu = \sum_{i=1}^{N} x_i$ with an error that is strictly less than the bias $\gamma \mu_+$ (Equation~(\ref{eqn:pos-sum})) guaranteed by $\htest$. Specifically, given a desired error parameter $\eps$ with $0<\eps<\gamma$, we wish to estimate $\mu$ with bias close to $\eps \mu_+$.
In our setting, for each sample we are given a random index $i \in [N]$ drawn from the unknown distribution $\calQ$, along with the value $x_i$ and our estimate $\calP(i)$ of the true probability $\calQ(i)$.
We introduce a family of estimators $\estimator_1, \estimator_2, \ldots$, where each $\estimator_k$ has bias at most $\gamma^k \mu_+$. To motivate the construction of $\estimator_k$, we first re-write the Hansen-Hurwitz estimator in terms of the frequency vector of samples from $S$. Specifically, if $X_1, X_2, \ldots, X_\ssize$ are the sampled elements, define the frequency vector $Y = (Y_1, Y_2, \ldots, Y_\usize)$ by
\[
  Y_i = \abs{\set{j \sucht X_j = i}}.  
\]
We define the estimator
\[
\xi_1 = \frac{1}{\ssize} \sum_{i = 1}^\usize Y_i \cdot \frac{x_i}{\calP(i)}.
\]
Note that this estimator can be efficiently implemented, as the items that have not been sampled contribute $0$ to the sum. We may thus implement this in time linear in the sample complexity, and do not need to take $O(\usize)$ time.

In the case where $\calP = \calQ$, $\xi_1$ is equivalent to the Hansen-Hurwitz estimator $\htest$. More generally, $\calQ$ is pointwise $\gamma$-close to $\calP$, and $\xi_1$ has bias at most $\gamma\mu_+$.

The estimator $\xi_1$ can be generalized as follows. Rather than sampling individual elements, we can examine $h$-wise collisions between samples, where an \emph{$h$-wise collision} consists of $h$ samples resulting in the same outcome.

\begin{definition}
  \label{eqn:h-wise-est}
  For any positive integer $h$, we define the \emph{$h$-wise collision estimator} $\xi_h$ of $\mu = \sum_i x_i$ to be
  \begin{equation}
    \label{eqn:xi-estimator}
    \xi_h = \frac{1}{\binom \ssize h} \sum_{i = 1}^{\usize} \binom{Y_i}{h} \frac{x_i}{(\calP(i))^h}.
  \end{equation}
\end{definition}

We note that $\binom{Y_i}{h}$ gives the number of $h$-wise collisions involving the value $X_j = i$. It is straightforward to show that when $\calQ = \calP$, all $\xi_h$ are unbiased estimators for $\mu$, and that $\xi_h$ has bias $O(h \gamma \mu_+)$ when $\calQ$ is pointwise $\gamma$-close to $\calP$.

Individually, the estimators $\xi_1, \xi_2, \ldots$ are no better than $\xi_1 = \htest$ in terms of bias and variance. As we will show, however, for any positive integer $k$, a suitable linear combination of the $\xi_i$ can be chosen such that the coefficients of $\gamma^j$ in the bias cancel out for $j < k$. The resulting estimator then has bias $\leq \gamma^k \mu_+$.

\begin{definition}
  \label{dfn:estimator}
  For each positive integer $k$, we define the \emph{bias reducing estimator of order $k$}, $\estimator_k$, to be
  \begin{equation}
    \label{eqn:estimator}
    \estimator_k = \sum_{h = 1}^k (-1)^{h+1} \binom k h \xi_h = \sum_{h = 1}^k (-1)^{h+1} \frac{\binom k h}{\binom m h} \sum_{i \in [\usize]} \binom{Y_i}{h} \frac{x_i}{(\calP(i))^h}.
  \end{equation}  
\end{definition}

\begin{example}
In order to give some intuition about the expression~(\ref{eqn:estimator}), consider the case where $\mathcal{P}$ is the uniform distribution and $k = 2$. Define $\alpha_i$ to be such that $\calQ(i) = (1+\alpha_i) \calP(i)$. Note that $|\alpha_i| \leq \gamma$ because $\calQ$ is assumed to be pointwise $\gamma$-close to uniform. By a simple calculation, we have that $\E[\xi_1] = \sum_{i=1}^\usize (1+\alpha_i)x_i$ and $\E[\xi_2] = \sum_{i=1}^\usize (1+\alpha_i)^2x_i = (1+2\alpha_i+\alpha_i^2) x_i$. Therefore, it holds that 
\[
\E[\estimator_2] = \E[2 \xi_1 - \xi_2] = \sum_{i=1}^\usize  2(1+\alpha_i)x_i - (1+2\alpha_i+\alpha_i^2) x_i = \sum_{i=1}^n (1+\alpha_i^2) x_i \subseteq \mu \pm \sum_{i=1}^n\alpha_i^2 |x_i| \subseteq \mu \pm \gamma^2 \mu_+. \footnote{The first of the two inclusions is not tight in that the absolute value in $|x_i|$ is not necessary, but it becomes necessary for odd values of $k$.}
\]
This proves that the bias of the estimator is at most $\gamma^2 \mu_+$. Similarly, one can show that the bias of the estimator above is $\leq \gamma^k \mu_+$. We prove this in \Cref{sec:sum-estimation}.
\end{example}


\begin{theorem}[Bias portion of \Cref{thm:estimate-sum}]
  \label{thm:estimator-bias-reduction}
  Supoose $\calP$ and $\calQ$ are probability distributions over $[\usize]$ with $\calQ$ pointwise $\gamma$-close to $\calP$. Let $\estimator_k$ be defined as in~(\ref{eqn:estimator}). Then
  \[
  \abs{\E[\estimator_k - \mu]} \leq \gamma^k \mu_+.
  \]
  In particular, if $x_i \geq 0$ for all $i$, then $\estimator_k$ has bias at most $\gamma^k \mu$.
\end{theorem}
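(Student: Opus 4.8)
The plan is to compute $\E[\estimator_k]$ exactly and then bound the deviation from $\mu$ term by term. The starting point is the observation that the samples $X_1,\dots,X_m$ are i.i.d.\ draws from $\calQ$, so for each $i$ the count $Y_i = \abs{\set{j\sucht X_j=i}}$ is $\mathrm{Binomial}(m,\calQ(i))$. The crucial identity I would establish first is
\[
  \E\!\left[\binom{Y_i}{h}\right] = \binom{m}{h}\,\calQ(i)^h,
\]
which follows by linearity of expectation: $\binom{Y_i}{h}$ counts the $h$-subsets of $\{1,\dots,m\}$ on which all samples equal $i$, and each such subset contributes $\calQ(i)^h$ in expectation (the samples being independent). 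Plugging this into~(\ref{eqn:xi-estimator}) gives
\[
  \E[\xi_h] = \frac{1}{\binom{m}{h}}\sum_{i=1}^{\usize}\binom{m}{h}\calQ(i)^h\,\frac{x_i}{\calP(i)^h}
  = \sum_{i=1}^{\usize} x_i\,\beta_i^h,
  \qquad \text{where } \beta_i := \frac{\calQ(i)}{\calP(i)}.
\]

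Next I would substitute this into the definition~(\ref{eqn:estimator}) of $\estimator_k$ and swap the order of summation:
\[
  \E[\estimator_k] = \sum_{i=1}^{\usize} x_i \sum_{h=1}^{k} (-1)^{h+1}\binom{k}{h}\beta_i^h.
\]
The inner sum is handled by the binomial theorem: since $\sum_{h=0}^{k}(-1)^h\binom{k}{h}\beta_i^h = (1-\beta_i)^k$, we get $\sum_{h=1}^{k}(-1)^{h+1}\binom{k}{h}\beta_i^h = 1 - (1-\beta_i)^k$. Hence
\[
  \E[\estimator_k] = \sum_{i=1}^{\usize} x_i\bigl(1 - (1-\beta_i)^k\bigr) = \mu - \sum_{i=1}^{\usize} x_i (1-\beta_i)^k,
\]
so that $\E[\estimator_k - \mu] = -\sum_i x_i(1-\beta_i)^k$.

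Finally I would bound this residual. Pointwise $\gamma$-closeness (Definition~\ref{dfn:pointwise}) says exactly that $1-\gamma \leq \beta_i \leq 1+\gamma$, i.e.\ $\abs{1-\beta_i}\leq\gamma$, so $\abs{(1-\beta_i)^k}\leq\gamma^k$ for every $i$. Therefore
\[
  \abs{\E[\estimator_k-\mu]} \leq \sum_{i=1}^{\usize}\abs{x_i}\,\abs{1-\beta_i}^k \leq \gamma^k\sum_{i=1}^{\usize}\abs{x_i} = \gamma^k\mu_+,
\]
and when all $x_i\geq 0$ we have $\mu_+=\mu$, giving the stated refinement. I do not anticipate a serious obstacle here; the only point requiring care is justifying the identity $\E[\binom{Y_i}{h}]=\binom{m}{h}\calQ(i)^h$ cleanly (this is where independence of the samples is used, and it is the one place where the "with replacement / i.i.d." nature of the sampling model matters), together with keeping track of the absolute values so that the bound holds for indefinite-sign $x_i$ as well.
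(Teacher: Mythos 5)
Your proposal is correct and follows essentially the same route as the paper's proof of the bias bound in Lemma~\ref{lem:estimate-sum}: the identity $\E\bigl[\binom{Y_i}{h}\bigr]=\binom{m}{h}\calQ(i)^h$ via the sum-over-$h$-subsets decomposition, swapping the order of summation, collapsing the inner sum with the binomial theorem to get $\E[\estimator_k]=\sum_i x_i\bigl(1-(1-\beta_i)^k\bigr)$, and bounding $\abs{1-\beta_i}\le\gamma$ pointwise. The only difference is notational (your $\beta_i$ is the paper's $1+\gamma_i$), so nothing further is needed.
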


This theorem shows that $\estimator_k$ reduces the bias to $\gamma^k$ compared to the bias $\gamma$ for the Hansen-Hurwitz estimator (equivalent to $\estimator_1$). \Cref{thm:estimate-sum} additionally bounds the variance of $\estimator_k$, which is required for our applications. 

We apply Theorem~\ref{thm:estimator-bias-reduction} (or more specifically, \Cref{thm:estimate-sum}) to obtain our main algorithmic results. Our goal is as follows: given sample access to some $\calQ$ that is pointwise $\gamma$-close to $\calP$ and a desired error parameter $\eps$, estimate $\mu$ with error $\eps$ using as few samples as possible. To this end, we employ a two-stage estimatotion technique (see Algorithm~\ref{alg:improved-estimate}). In the first stage, we use the 1-wise collision estimator $\xi_1$ (i.e., the Hansen-Horwitz estimator) to obtain a coarse estimate of $\mu$. Then, the second stage refines this estimate by applying the bias reducing estimator $\estimator_k$ with an appropriately chosen $k$. Specifically, we show the following:

\begin{theorem}[Special case of \Cref{thm:estimate-sum}]
  \label{thm:intro-estimate-sum}
  Define $\esize = \max_i 1 / \calP(i)$.\footnote{Note that $\usize \leq \esize$, where $\usize$ is the size of the multiset being sampled. In the case where $\calP(i) = \Omega(1/\usize)$ for all $i$, we have $\esize = \Theta(\usize)$. The convention of defining $\esize$ in this way was previously used in~\cite{eden2021-learning}.} Suppose $\calQ$ is pointwise $\gamma$-close to $\calP$, and let $\Varht$ denote the variance of the Hansen-Hurwitz estimator (Equation~\ref{eqn:ht-var}). For $\eps_1,\eps_2 > 0$, define $k = \lceil (\log \eps_1)/\log \gamma)\rceil$. Then using
  \[
  m = O\left(\sqrt[k]{n^{k-1} \eps_2^{-2} \Varht} \right)
  \]
   independent samples from $\calQ$, with probability at least $2/3$, Algorithm~\ref{alg:improved-estimate} produces an estimate $\hat\mu$ of $\mu = \sum_i x_i$ 
   with absolute error
   \[
  \abs{\mu - \hat\mu} \leq \eps_1 \mu_+ + \eps_2
  \]
\end{theorem}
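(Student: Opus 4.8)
The plan is to combine the bias bound of \Cref{thm:estimator-bias-reduction} with a variance bound for $\estimator_k$ and a two-stage coarse-then-refine scheme (Algorithm~\ref{alg:improved-estimate}), finishing with Chebyshev's inequality. Choose $k=\lceil(\log\eps_1)/(\log\gamma)\rceil$; since $0<\gamma<1$ we have $\log\gamma<0$, so $k\log\gamma\le\log\eps_1$, i.e.\ $\gamma^k\le\eps_1$. Hence by \Cref{thm:estimator-bias-reduction}, for \emph{any} weight vector $v=(v_1,\dots,v_\usize)$ the estimator $\estimator_k$ built from $v$ satisfies $\abs{\E[\estimator_k(v)]-\sum_i v_i}\le\gamma^k\sum_i\abs{v_i}\le\eps_1\sum_i\abs{v_i}$.

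The technical core is a variance bound for $\estimator_k(v)$. Writing $\estimator_k=\sum_{h=1}^k(-1)^{h+1}\binom{k}{h}\xi_h$, each $\xi_h$ is a degree-$h$ $U$-statistic in the i.i.d.\ samples $X_1,\dots,X_m\sim\calQ$ whose kernel is supported on $h$-tuples that all hit one common index. Using $\E[\binom{Y_i}{h}]=\binom{m}{h}\calQ(i)^h$ together with the Hoeffding/ANOVA decomposition of each $\xi_h$ and the cross-covariances $\mathrm{Cov}(\xi_h,\xi_{h'})$, one expands $\Var[\estimator_k(v)]$ into $\mathrm{poly}(k)$ many terms of the schematic form $m^{-a}\sum_i v_i^2\,\calQ(i)^{\bullet}\calP(i)^{-(h+h')}$ with $1\le a\le k$. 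Bounding $\calP(i)^{-a}\le\esize^{a-1}\calP(i)^{-1}$ (valid since $\calP(i)^{-1}\le\esize$) and using that $\calQ$ is pointwise $\gamma$-close to $\calP$, every such term is $O\!\big(\mathrm{poly}(k)\,\esize^{a-1}m^{-a}\sum_i v_i^2/\calP(i)\big)$, and for $m=O(\esize^{1-1/k})$ — the regime of interest — the $a=k$ term dominates (one checks $\esize^{a-1}/m^{a}\le\esize^{k-1}/m^{k}$ for $a\le k$ whenever $m\le\esize$). Since $\sum_i v_i^2/\calP(i)=(1\pm O(\gamma))\sum_i v_i^2/\calQ(i)=(1\pm O(\gamma))\big(\Varht(v)+(\sum_i v_i)^2\big)$, where $\Varht(v)=\sum_i\calQ(i)(v_i/\calQ(i)-\sum_j v_j)^2$ is the Hansen--Hurwitz variance for the weights $v$ (Equation~(\ref{eqn:ht-var}) read without the $\tfrac1m$), we get
\[
\Var[\estimator_k(v)]\;=\;O\!\left(\mathrm{poly}(k)\cdot\frac{\esize^{k-1}}{m^{k}}\Big(\Varht(v)+\big(\textstyle\sum_i v_i\big)^2\Big)\right).
\]

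For the algorithm itself, Stage~1 runs Hansen--Hurwitz ($\estimator_1=\xi_1$) on $x_1,\dots,x_\usize$; by \Cref{thm:hansen-hurwitz} (with a $1+O(\gamma)$ blow-up from substituting $\calP$ for $\calQ$) this produces $\tilde\mu$ with $\abs{\mu-\tilde\mu}=O(\gamma\mu_+ + \eps_2)$ with probability $\ge5/6$ using $O(\Varht/\eps_2^2)$ samples, which is $O(m)$ in the regime of interest. Stage~2 then applies $\estimator_k$ not to $x$ but to the \emph{residuals} $z_i=x_i-\tilde\mu\,\calP(i)$ and outputs $\hat\mu=\tilde\mu+\estimator_k(z)$. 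This is the crucial use of the coarse estimate: $\sum_i z_i=\mu-\tilde\mu$ is small, so the $(\sum_i v_i)^2$ term in the variance bound becomes $(\mu-\tilde\mu)^2=O(\gamma^2\mu_+^2+\eps_2^2)$ instead of $\mu^2$; moreover $\Varht(z)=O(\Varht+\gamma^2\mu_+^2+\eps_2^2)$ and $\sum_i\abs{z_i}\le\mu_+ + \abs{\tilde\mu}=O(\mu_+ + \eps_2)$, so by the first step $\abs{\E[\hat\mu\mid\tilde\mu]-\mu}=O(\eps_1\mu_+ + \eps_2)$ and
\[
\Var[\hat\mu\mid\tilde\mu]\;=\;O\!\left(\mathrm{poly}(k)\cdot\frac{\esize^{k-1}}{m^{k}}\big(\Varht+\gamma^2\mu_+^2+\eps_2^2\big)\right).
\]
With $m=\Theta\!\big(\sqrt[k]{\mathrm{poly}(k)\,\esize^{k-1}\eps_2^{-2}\Varht}\big)$ and $\gamma^k\le\eps_1$ the right side is $O(\eps_2^2+\eps_1^2\mu_+^2)$, so Chebyshev gives $\abs{\hat\mu-\E[\hat\mu\mid\tilde\mu]}=O(\eps_1\mu_+ + \eps_2)$ with probability $\ge5/6$; a union bound with Stage~1 and rescaling of the constants hidden in $\eps_1,\eps_2$ yield $\abs{\hat\mu-\mu}\le\eps_1\mu_+ + \eps_2$ with probability $\ge2/3$.

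The main obstacle is the variance expansion of the previous paragraph: one must control all $\mathrm{poly}(k)$ terms, including the cross-covariances $\mathrm{Cov}(\xi_h,\xi_{h'})$, and verify that the seemingly-larger intermediate-order collision terms ($a<k$) never dominate in the range $m=O(\esize^{1-1/k})$. A secondary subtlety is propagating the Stage-1 error $\mu-\tilde\mu$ through the Stage-2 variance without loss; handling the $\gamma^2\mu_+^2$ contribution may require taking $m$ to be the maximum of the stated quantity and a term such as $\esize^{1-1/k}(\gamma/\eps_1)^{2/k}$, which the full \Cref{thm:estimate-sum} presumably carries but which lies below the resolution of the special case stated here.
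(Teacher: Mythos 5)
Your proposal follows essentially the same route as the paper: the same two-stage scheme (a Hansen--Hurwitz coarse estimate $W$, then $\estimator_k$ applied to the residuals $x_i - W\calP(i)$), the same bias cancellation via the choice $\gamma^k \le \eps_1$, and the same variance bound of the schematic form $\mathrm{poly}(k)\,\esize^{k-1}m^{-k}\sum_i v_i^2/\calP(i)$ followed by Chebyshev and a union bound. The one step you leave as a sketch---showing that the cross-covariance terms vanish and that the extreme-order term dominates---is exactly what the paper supplies via a combinatorial identity (\Cref{lem:expression-mean-zero}, which rewrites $\estimator_k - \E[\estimator_k]$ as a sum of orthogonal centered products, i.e.\ precisely the Hoeffding decomposition you invoke) together with a log-convexity argument over the collision order $h$, so the gap is one of execution rather than of a missing idea.
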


To understand the complexity of this algorithm, we note that when $\mathcal{P} = \mathcal{Q}$, in order to get an error $\epsilon_2$ the complexity of the Hansen-Hurwitz estimator is $\eps_{2}^{-2} \Varht$. The complexity of our algorithm can thus be seen as a weighted geometric average between the complexity of the Hansen-Hurwitz estimator, and $\esize$.

As a corollary of \Cref{thm:intro-estimate-sum}, we obtain a solution to the aforementioned problem of estimating a sum of $0$-$1$ values.

\begin{corollary}
  \label{cor:zero-one-sum}
  Suppose $\calQ$ is pointwise $\gamma$-close to the uniform distribution over $[\usize]$ and $x_i \in \set{0, 1}$ for every $i \in [\usize]$. For any $\eps > 0$ define $k = \lceil (\log \eps) / \log \gamma \rceil$. Then $m = O(n^{1 - 1/k} \eps^{-2/k})$ samples are sufficient to obtain an estimate of $\mu = \sum_{i} x_i$ with additive error $\eps \usize$ with probability $2/3$. 
\end{corollary}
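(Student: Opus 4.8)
The plan is to obtain \Cref{cor:zero-one-sum} as a direct specialization of \Cref{thm:intro-estimate-sum}. Take $\calP$ to be the uniform distribution on $[\usize]$, so that $\esize = \max_i 1/\calP(i) = \usize$ and the hypothesis ``$\calQ$ is pointwise $\gamma$-close to $\calP$'' is exactly ``$\calQ$ is pointwise $\gamma$-close to uniform.'' Since $x_i \in \set{0,1}$ we have $\mu_+ = \sum_i \abs{x_i} = \sum_i x_i = \mu \le \usize$. I would then invoke \Cref{thm:intro-estimate-sum} with $\eps_1 = \eps$ and $\eps_2 = \eps\usize$: with this choice the parameter $k = \lceil (\log\eps_1)/\log\gamma \rceil$ coincides with the $k$ in the corollary, the success probability $2/3$ is inherited directly, and the guaranteed error is $\eps_1\mu_+ + \eps_2 \le \eps\mu + \eps\usize \le 2\eps\usize$. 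The harmless factor of $2$ is removed by rerunning the argument with $\eps$ replaced by $\eps/2$, which — for $\gamma$ bounded away from $1$ — leaves $k$ unchanged and thus changes the sample bound by at most a constant.

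The only quantity left to pin down is the variance term $\Varht$ occurring in \Cref{thm:intro-estimate-sum}. Expanding the relevant variance quantity (Equation~\eqref{eqn:ht-var}, without the $1/\ssize$ normalization) and using $x_i^2 = x_i$ together with $\sum_i \calQ(i) = 1$ gives
\[
\Varht = \sum_{i=1}^{\usize} \calQ(i)\paren{\frac{x_i}{\calQ(i)} - \mu}^2 = \sum_{i\,:\,x_i = 1} \frac{1}{\calQ(i)} - \mu^2 .
\]
By pointwise $\gamma$-closeness to uniform, $\calQ(i) \ge (1-\gamma)/\usize$ for all $i$, hence $1/\calQ(i) \le \usize/(1-\gamma)$ and therefore $\Varht \le \mu\usize/(1-\gamma) - \mu^2 = O(\mu\usize) = O(\usize^2)$, treating $\gamma$ as a constant bounded away from $1$.

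Plugging $\esize = \usize$, $\eps_2 = \eps\usize$, and $\Varht = O(\mu\usize) \le O(\usize^2)$ into the sample-complexity bound of \Cref{thm:intro-estimate-sum} yields
\[
\ssize = O\!\paren{\sqrt[k]{\usize^{k-1}\,(\eps\usize)^{-2}\,\usize^{2}}} = O\!\paren{\sqrt[k]{\usize^{k-1}\eps^{-2}}} = O\!\paren{\usize^{1-1/k}\eps^{-2/k}},
\]
which is the claimed bound. Since the statement is essentially a substitution into an already-established theorem, I do not expect a substantial obstacle: the two points requiring a little care are (i) the variance computation above, which must invoke the lower bound on $\calQ(i)$ from \Cref{dfn:pointwise}, and (ii) matching the parameters — because $k$ is defined via a ceiling, the constant-factor rescaling of $\eps$ used to clean up the error bound is innocuous only when $\gamma$ is bounded away from $1$ (for $\gamma$ extremely close to $1$ one would instead state the error as $O(\eps\usize)$ and absorb $1/(1-\gamma)$ into the constant).
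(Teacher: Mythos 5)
Your proof is correct and follows exactly the route the paper intends: the paper offers no separate proof of this corollary (it only remarks that \Cref{thm:estimate-sum} "implies in a straightforward way" the $0$-$1$ case), and your specialization --- $\calP$ uniform so $\esize=\usize$, $\Varht=\sum_{i:x_i=1}\calQ(i)^{-1}-\mu^2=O(\mu\usize)=O(\usize^2)$, and $\eps_1=\eps$, $\eps_2=\eps\usize$ --- is precisely that substitution. One small correction: your claim that rescaling $\eps\mapsto\eps/2$ leaves $k=\lceil(\log\eps)/\log\gamma\rceil$ unchanged whenever $\gamma$ is bounded away from $1$ is false (e.g.\ $\gamma=1/2$, $\eps=1/4$ gives $k=2$ while $\eps/2=1/8$ gives $k=3$), and when the ceiling jumps the resulting bound $O(\usize^{1-1/(k+1)}\eps^{-2/(k+1)})$ is genuinely weaker than the stated one; this is only a constant-factor bookkeeping issue, though, and the paper itself proves only error $\eps(\mu+\sqrt{\mu\usize})\le 2\eps\usize$ in its Section~2 version (\Cref{cor:zeros_and_ones}) and silently absorbs the same factor of $2$, so your argument is no less rigorous than the original.
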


We note that the asymptotic sample complexities in Theorem~\ref{thm:intro-estimate-sum} and Corollary~\ref{cor:zero-one-sum} are non-uniform in $\gamma$ and $\eps$. In the case of Corollary~\ref{cor:zero-one-sum}, for any fixed postive integer $k$ and constant $\gamma > 0$, if $\eps = \gamma^k$, then $O(n^{1 - 1/k})$ samples are sufficient to obtain an $\eps \usize$ additive estimate. On the other hand, if $\gamma^{k+1} \leq \eps < \gamma^k$, then our algorithm uses $O(n^{1 - 1/(k+1)})$ samples. Our next main result shows that this sample complexity is essentially optimal, and perhaps surprisingly, that the non-uniformity of the sample complexity is unavoidable. Specifically, we show the following lower bound.

\begin{theorem}
  \label{thm:lb-intro}
  Let $\usize$ be a positive integer and $x_1, x_2, \ldots, x_\usize \in \set{0, 1}$. Suppose $A$ is any algorithm such that for any distribution $\calQ$ that is pointwise $\gamma$-close to uniform, and any $\eps > 0$, $A$ produces an estimate $\hat\mu$ of $\mu$ with additive error at most $\eps \usize$ with probability $2/3$ using samples from $\calQ$. Then for every positive integer $k$, there exists an absolute constant $c_k$ with $0 < c_k < 1$ such that for $\eps \leq c_k \gamma^k$, $A$ requires $\Omega(n^{1 - 1/(k+1)})$ samples from $\calQ$. 
\end{theorem}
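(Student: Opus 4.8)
The plan is a two-point (Le Cam) lower bound combined with a symmetrization. Since the target $\mu=\sum_i x_i$ is invariant under permutations of the ground set $[\usize]$, we may assume $A$ is symmetric: composing $A$ with a uniformly random relabelling of $[\usize]$, applied to both the hidden labelling and to the samples, preserves the guarantee, so we may assume $A$'s output depends only on the \emph{fingerprint} of its sample, i.e.\ the anonymized data $\set{(Y_i,x_i)\sucht Y_i\ge 1}$, where $Y_i$ is the number of times element $i$ was drawn (the reported probabilities $\calP(i)=1/\usize$ are all equal and carry no extra information). It then suffices to produce, for $b\in\set{1,2}$, a labelling $x^{(b)}\in\set{0,1}^{\usize}$ and a distribution $\calQ^{(b)}$ pointwise $\gamma$-close to uniform such that (i) $\abs{\mu^{(1)}-\mu^{(2)}}>2\eps\usize$, where $\mu^{(b)}=\sum_i x^{(b)}_i$, and (ii) the fingerprint distributions for $\ssize=o(\usize^{1-1/(k+1)})$ samples have total variation distance $o(1)$ (in particular $<1/3$ for large $\usize$). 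Given these, an estimator with additive error $\eps\usize$ and success probability $2/3$ would output a value within $\eps\usize<\abs{\mu^{(1)}-\mu^{(2)}}/2$ of the true sum and hence identify $b$ with probability $2/3$ in each case, so the two fingerprint distributions would be at total variation distance at least $1/3$, a contradiction. As $\calP$ is uniform, $n=\usize$, giving the stated bound.

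\emph{The two instances.} Fix $k+1$ distinct reals $y_0,\dots,y_k\in[-1,1]$, symmetric about $0$, and partition $[\usize]$ into $k+1$ \emph{types}, type $t$ having $N_t=\usize/(k+1)$ elements, each with $\calQ$-probability $(1+\gamma y_t)/\usize$ in \emph{both} instances; by symmetry $\sum_t N_t(1+\gamma y_t)=\usize$, so this is a valid distribution, pointwise $\gamma$-close to uniform (each probability lies in $[(1-\gamma)/\usize,(1+\gamma)/\usize]$). Instance $b$ differs only in labelling $n^{(b)}_t$ of the type-$t$ elements by $1$. Writing $\delta_t=n^{(1)}_t-n^{(2)}_t$ and $\Delta=\sum_t\delta_t$, we impose
\[
  \sum_{t=0}^{k}\delta_t\,(1+\gamma y_t)^h=0\quad(h=1,\dots,k),\qquad \sum_{t=0}^{k}\delta_t=\Delta .
\]
The coefficient matrix $\big[(1+\gamma y_t)^h\big]_{0\le h,t\le k}$ is a Vandermonde matrix in the distinct numbers $1+\gamma y_t$, hence invertible, and the column of its inverse giving $\delta$ in terms of $\Delta$ has entries of magnitude $O_k(\gamma^{-k})$ (the $\gamma^{-k}$ coming from the denominators $\prod_{s\ne t}\gamma(y_t-y_s)$). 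Take $\Delta=\lceil 3\eps\usize\rceil$; since $\eps\le c_k\gamma^k$ for the constant $c_k=c_k(k)$ we are choosing, $\abs{\delta_t}=O_k(\gamma^{-k}\Delta)=O_k(c_k\usize)$, which is at most $N_t$ once $c_k$ is small enough. Then $n^{(b)}_t=N_t/2+(-1)^{b+1}\delta_t/2\in\set{0,\dots,N_t}$ is a valid labelling, $\mu^{(1)}-\mu^{(2)}=\Delta>2\eps\usize$ (establishing (i)), and two facts hold by construction: the total $\calQ$-mass on the $1$-labelled elements equals the same value $w$ in both instances (the $h=1$ equation), and for $h=1,\dots,k$ the power sum $\sum_{i\sucht x^{(b)}_i=1}\calQ^{(b)}(i)^h=\usize^{-h}\sum_t n^{(b)}_t(1+\gamma y_t)^h$ agrees between the instances — and likewise $\sum_{i\sucht x^{(b)}_i=0}\calQ^{(b)}(i)^h$ does, since $\sum_t N_t(1+\gamma y_t)^h$ does not depend on $b$.

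\emph{Indistinguishability.} At the cost of $o(1)$ in total variation we may Poissonize the sample size, after which the $Y_i$ are independent with $Y_i\sim\mathrm{Poisson}(\ssize\,\calQ^{(b)}(i))$. The fingerprint then factors into two independent pieces — the anonymized multiplicities of the $1$-labelled elements, and of the $0$-labelled elements — so the total variation distance between the instances is at most the sum of the distances between the two pieces. Each piece's distribution is a function of the multiset of Poisson means $\set{\ssize\,\calQ^{(b)}(i)}$, all of size $O(\ssize/\usize)=o(1)$; by the standard moment-matching estimate for Poissonized fingerprints (the technique underlying lower bounds for symmetric properties of distributions), the total variation between the two fingerprints is $O_k\!\big(\sum_{h\ge k+1}\tfrac1{h!}\abs{D_h}\big)$, where $D_h$ is the discrepancy of the order-$h$ power sums of the Poisson means. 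By the construction $D_h=0$ for all $h\le k$, so this is dominated by its $h=k+1$ term, with $\abs{D_{k+1}}=\ssize^{k+1}\usize^{-(k+1)}\big|\sum_t\delta_t(1+\gamma y_t)^{k+1}\big|=O_k(\ssize^{k+1}\usize^{-k})$ (using $\abs{\delta_t}=O_k(\usize)$). Since $\ssize=o(\usize^{1-1/(k+1)})$ forces $\ssize^{k+1}\usize^{-k}=o(1)$, both pieces contribute $o(1)$, proving (ii) for all large $\usize$.

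\emph{Conclusion and main difficulty.} Combining with the Le Cam reduction shows $A$ needs $\Omega(\usize^{1-1/(k+1)})=\Omega(n^{1-1/(k+1)})$ samples whenever $\eps\le c_k\gamma^k$, with $c_k\in(0,1)$ depending only on $k$. The step I expect to be most delicate is the Poissonized-fingerprint estimate: one must show that the distribution of the anonymized multiplicity vector depends on the underlying multiset of (small) Poisson means continuously in total variation, with a rate controlled by the discrepancies of the low-order power sums, so that matching the first $k$ of them costs only $o(1)$. This is precisely the moment-matching machinery from distribution-property-estimation lower bounds, and it is what pins the threshold at the scale $\usize^{1-1/(k+1)}$, where $(k+1)$-wise collisions first become likely. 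The construction of the two instances and the Poissonization are routine by comparison.
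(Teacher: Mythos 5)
Your proposal is correct in its overall strategy and shares the paper's skeleton --- reduce to label-invariant (``uniform'') algorithms by random relabelling, build two instances whose sums differ by $\Theta(\eps\usize)$ but whose first $k$ collision statistics agree, and conclude that only $(k{+}1)$-wise collisions can distinguish them, forcing $\Omega(\usize^{1-1/(k+1)})$ samples --- but the execution of both key steps differs. For the construction, the paper solves the same linear system you write down ($\sum_t\delta_t(1+\gamma y_t)^h=0$ for $h\le k$, $\sum_t\delta_t$ large) \emph{explicitly}: it takes signed binomial coefficients $(-1)^i\binom{k}{i}$ divided by $(1+i\gamma/k)$, verifies moment matching via $\sum_i(-1)^i\binom{k}{i}P(i)=0$ for $\deg P<k$, and computes the support-size gap exactly as $k!\gamma^k/\prod_i(1+i\gamma/k)$ times $n_0/(2^{k-1}k^k)$; you instead invert a Vandermonde matrix and bound the inverse by $O_k(\gamma^{-k})$. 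Both routes locate the threshold $\eps\le c_k\gamma^k$ in the same place (the solution size scales like $\gamma^{-k}\Delta$), and yours is arguably cleaner, though the paper's explicit formula pins down $c_k$ concretely. For indistinguishability, the paper packages the two labellings as a reduction to distinguishing two genuine distributions $D_1,D_2$ with matching frequency moments and invokes Raskhodnikova et al.\ as a black box, whereas you Poissonize, factor the fingerprint into the $0$- and $1$-labelled pieces, and appeal directly to a moment-matching total-variation bound of the form $O_k(\sum_{h\ge k+1}|D_h|/h!)$. That bound is the one step you assert rather than prove; it is exactly the content of the cited Lemma~5.3/Corollary~5.7 of Raskhodnikova et al.\ (and of Valiant-style ``wishful thinking'' estimates), so you should either cite it in that form or verify your stated quantitative version, but this is the same level of reliance on standard machinery as the paper's own proof, not a gap in the argument.
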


This lower bound matches the upper bound of Corollary~\ref{cor:zero-one-sum} for a large range of parameters. When $\gamma \leq c_k$ (where $c_k$ is as in the conclusion of the theorem), our algorithm has sample complexity $O(n^{1 - 1/(k+1)})$ for all $\eps \in [\gamma^{k+1}, \gamma^k)$, while the lower bound shows $\Omega(n^{1 - 1/(k+1)})$ samples are necessary for all $\eps \in [\gamma^{k+1}, c_k \gamma^k]$. Interestingly, these matching upper and lower bounds show that the asymptotic sample complexity is non-uniform as a function of $\epsilon$ for any fixed (sufficiently small) $\gamma$: for every $\eps \in [\gamma^{k+1}, c_k \gamma^k]$, exactly $\Theta(n^{1 - 1/(k+1)})$ samples are necessary and sufficient, while for $\eps = \gamma^k$, $\Theta(n^{1 - 1/k})$ samples are necessary and sufficient. Thus, as a function $\eps$, the sample complexity contains ``islands of stability''---intervals in which some perturbations of $\eps$ have
no effect on the asymptotic sample complexity---while between these intervals, an arbitrarily small (constant) decrease in $\eps$ results in a polynomial (in $\usize$) increase in the sample complexity. 

\paragraph{Discussion and Related Work.}

Throughout the paper, we assume that the probability distribution $\calQ$ from which samples are generated is pointwise close to $\calP$ in the sense of Definition~\ref{dfn:pointwise}. Pointwise closeness is a strictly stronger (and less commonly used) distance measure than, for example, total variation distance (i.e., $L_1$ distance) or other $L_p$ distances. Nonetheless, this relatively strong assumption about the relationship between $\calP$ and $\calQ$ is necessary in order to obtain any non-trivial guarantee for estimating $\mu$.\footnote{As an extreme example, consider the case where only a $\gamma$-fraction of values $x_i$ are nonzero. If $\calP$ is uniform, then $\calQ$ can assign zero mass to the nonzero elements, and still be $\gamma$-close to $\calP$ with respect to total variation distance. Thus, \emph{any} estimator will return a value that is independent of the actual sum.} Algorithms for generating samples with pointwise approximation guarantees have been studied in the context of sublinear time algorithms~\cite{Eden2018-sampling, onak2018-probability, Eden2019-arboricity, Eden2021-sampling, eden2021-learning, Tetek2022-edge, Eden2022-almost} as well as Markov chains~\cite{Morris2003-evolving, hermon2018sensitivity}. In the latter case, \emph{uniform mixing time} gives a bound on complexity of obtaining samples with pointwise guarantees. Interestingly, Hermon~\cite{hermon2018sensitivity} shows a result that can be viewed as an analogue of our lower bound for Markov chains (specifically random walks on bounded degree graphs). Namely, small perturbations in the transition probabilities of edges can result in an asymptotic increases in the uniform mixing time.

The problem of estimating the sum is well-studied in statistics. Classical estimators for non-uniform sampling probabilities are described by Hansen and Hurwitz~\cite{hansen1943theory} and Horvitz and Thompson~\cite{horvitz1952-generalization}.
Sum estimation in the related setting where we do not know the sampling probabilities but know that they are proportional to the items' values has been studied in~\cite{motwani2007estimating, beretta2022better}.

\paragraph{Open problem: Sample correctors for uniform sampling.} Finally, we state one interesting open problem. The concept of sample correctors from~\cite{Canonne2018-sampling} assumes that we may sample from a distribution that is close to some property, and we want to be able to use it to sample from a distribution even closer to satisfying the property. It is natural to ask if one can use $o(n)$ samples from a distribution pointwise $\gamma$-close to uniform and simulate a sample with bias $o(\gamma)$.


\subsection{Technical Overview}
\label{sec:technical-overview}

\paragraph{Upper Bound.}
The goal is to reduce the bias from $\gamma$ to $\gamma^k$.
Now the main observation that guides our construction is the following identity:
\begin{align}\label{eq:bias-identity}
  1 + (-1)^{k + 1} \gamma^k = \sum_{h = 1}^k (-1)^{h + 1} \binom{k}{h} (1 + \gamma)^h
    \; .
\end{align}
This should be compared to our estimator, $\estimator_k = \sum_{h = 1}^k (-1)^{h+1} \binom k h \xi_h$, which clearly mirrors the identity.
The reason for this is that the probability of an $h$-wise collision on position $i \in [\usize]$ is $\calQ(i)^h \approx (1 + \gamma)^h \calP(i)^h$ in the worst case when $\frac{\calQ(i)}{\calP(i)} = 1+\gamma$.
Hence the expectation of the $h$-wise collision estimator, $\xi_h$, is approximately bounded by $(1 + \gamma)^h \mu$.
This implies that when we take the expectation of our estimator then the expression reduces to \cref{eq:bias-identity} which shows that the bias is reduced to $\gamma^k$.
Here, we cheated slightly by assuming that the bias is the same for all the positions $i \in [\usize]$.
This is of course not true, but actual calculation reduces to $\usize$ instances of \cref{eq:bias-identity}.

The more delicate part of the analysis of our estimator is the bound on the variance.
The main difficulty lies in rewriting $\estimator_k - \E[\estimator_k]$ into something manageable.
We note that we can write our estimator $\estimator_k$ as
\begin{align*}
  \estimator_k = \sum_{i \in [\usize]} x_i \sum_{h = 1}^k (-1)^{h+1} \frac{\binom k h}{\binom m h}  \frac{\binom{Y_i}{h}}{(\calP(i))^h} \; .  
\end{align*}
We can express the frequency vector $Y$ by random variables as follows: $Y_i = \sum_{j \in [\ssize]} [X_j = i]$. 
(Here, we use $[X_j=i]$ to denote the indicator random variable of the event $X_j=i$.)
This allows us to see that $\binom{Y_i}{h} = \sum_{\substack{I \subseteq [\ssize]\\|I| = h}} \prod_{j \in I} [X_j = i]$.
If we now define the polynomials $P_i$ by 
\begin{align*}
  P_i(\beta_1, \ldots, \beta_{\ssize}) = \sum_{h = 1}^k (-1)^{h+1} \frac{\binom k h}{\binom m h}  \frac{1}{(\calP(i))^h} \sum_{\substack{I \subseteq [\ssize]\\|I| = h}} \prod_{j \in I} \beta_j
  \; ,
\end{align*}
then we can write our estimator as $\estimator_k = \sum_{i \in [\usize]} x_i P_i([X_1 = i], \ldots, [X_m = i])$.
We observe that $P$ has degree 1 in each variable so $\E[P_i([X_1 = i], \ldots, [X_m = i])] = P_i(\calQ(i), \ldots, \calQ(i))$.
Furthermore, it seems reasonable that there should exist polynomials $Q_i$ which have degree 1 in each variable and satisfy $P_i([X_1 = i], \ldots, [X_m = i]) - P_i(\calQ(i), \ldots, \calQ(i)) = Q_i([X_1 = i] - \calQ(i), \ldots, [X_m = i] - \calQ(i))$.
This will help us in understanding the variance of our final estimator $\zeta_k$ by decomposing into variances of the simpler events $[X_1=i]$.
We show that $Q_i$ exist and are defined as follows
\begin{align*}
  Q_i(\beta_1, \ldots, \beta_m) = (-1)^{k + 1} \sum_{h = 1}^k \frac{\binom k h}{\binom m h}  \frac{\gamma_i^{k - h}}{(\calP(i))^h} \sum_{\substack{I \subseteq [\ssize]\\|I| = h}} \prod_{j \in I} \beta_j
\end{align*}
So we get that $\estimator_k - \E[\estimator_k] = \sum_{i \in [\usize]}x_i Q_i([X_1 = i] - \calQ(i), \ldots, [X_m = i] - \calQ(i))$.
Since $[X_1 = i] - \calQ(i)$ is a zero mean variable and $Q_i$ has degree 1 in each variable, it becomes easy to calculate the variance.

A technical detail that we have not touched upon yet is that if $k \ge 2$ then $\Var[\estimator_k]$ becomes very large if $\mu^2 \gg \Var[\htest]$.
The reason is that $\Var[\estimator_k]$ depends on $\sum_{i \in [\usize]} \calP(i) \left(\frac{x_i}{\calP(i)}\right)^2 = \Var[\htest] + \mu^2$.
Thus, if $\mu^2 \gg \Var[\htest]$ then our variance becomes much larger which is a problem.
Now imagine that we already have an estimate, $W$, of $\mu$.
We then set $\bar{x}_i = x_i - \calP(i)W$ and make a new estimator $\bar{\estimator}_k$ that uses $\bar{x}_i$ instead of $x_i$.
Then $\bar{\estimator}_k$ will estimate $\mu - W$ so $\bar{\estimator}_k + W$ will be an estimator of $\mu$.
The trick is that $\Var[\bar{\estimator}_k]$ depends on $\Var[\htest] + (\mu - W)^2$ so if $W = \mu + O(\sqrt{\Var[\htest]})$ then we will have control over the variance.
We can get such estimate, $W$, by using our estimator for $k = 1$ where there are no issues with variance (i.e., the standard deviation of $W - \mu$ only depends on $\sqrt{\Var[\htest]}$ rather than on $\mu$).

\paragraph{Lower Bound.}

At a high level, our strategy is to define a reduction from the problem of distinguishing two distributions $D_1$ and $D_2$ to the problem of estimating $\mu = \sum_i x_i$. More concretely, for each fixed positive integer $k$, we define distributions $D_1$ and $D_2$ with support sizes $n_1$ and $n_2$, respectively, such that:
\begin{enumerate}
\item $n_1 = (1 + (\Theta(\gamma))^k) n_2$,
\item $D_1$ and $D_2$ are both pointwise $\gamma$-close to uniform, and
\item $D_1$ and $D_2$ have identical $p\th$ frequency moments for $p = 1, 2, \ldots, k$.\footnote{Recall that the $p\th$ frequency moment of a distribution $D$ is $\sum_{x} (D(x))^p$.}
\end{enumerate}
We describe a reduction showing that for $\usize = n_1 + n_2$, $\calP$ uniform over $[\usize]$, and $x_i \in \set{0, 1}$ for all $i \in [\usize]$, any algorithm that distinguishes $\mu = n_1$ from $\mu = n_2$ can also distinguish $D_1$ from $D_2$. We then apply a framework of by Raskhodnikova et al.~\cite{raskhodnikova2009strong}, which implies that distinguishing any two distributions whose first $k$ frequency moments are equal requires  $\Omega(n^{1-1/(k+1)})$ samples.
One difference between our setting and that of~\cite{raskhodnikova2009strong} is that in our setting, for each sample $i$ we are also given $x_i$, whereas~\cite{raskhodnikova2009strong} focuses on support size estimation. However, since the $x_i$'s are $0$ or $1$-valued, estimating the sum requires us to estimate either the number of $x_i$'s which are $0$ or the number of $x_i$'s which are $1$.
We can apply this observation to reduce the problem to finding two nearly uniform distributions (i.e., both are pointwise $\gamma$-close to uniform) that differ in support size by a multiplicative $1 \pm \Theta(\gamma)^k$ factor, yet match in the first $k$ moments.

To do this, we use a combinatorial construction that is inspired by a related lower bound in~\cite{eden2021-learning}. For simplicity, we assume that the probability of sampling any fixed item lies in
$\{\frac{1}{n}, \frac{1+\gamma}{n}, \dots, \frac{1+k \gamma}{n}\}$ (while these distributions would only be $k \cdot \gamma$-close to uniform, we can replace $\gamma$ with $\gamma/k$). For the distribution $D_1$, we assume the number of elements with probability $\frac{1+i \cdot \gamma}{n}$ is $a_i$, and for the distribution $D_2$ the number of elements with probability $\frac{1 + i \cdot \gamma}{n}$ is $b_i$. Then, our goal is for the support sizes of $D_1$ and $D_2$ (which are $\sum a_i, \sum b_i$, respectively) to differ significantly but for the first $k$ moments to match. This means $\sum a_i$ and $\sum b_i$ differ significantly, but $\sum a_i (1 + i \cdot \gamma)^\ell = \sum b_i (1 + i \cdot \gamma)^\ell$. If we define $c_i := a_i-b_i$ and think of $(1 + i \cdot \gamma)^\ell$ as a degree $\ell$ polynomial $P(i),$ we want $\sum c_i P(i) = 0$ but $\sum c_i$ to be large (roughly $\gamma^k \cdot n$). Finally, we need to make sure that $\sum a_i, \sum b_i$ are both $\Theta(n)$.

To determine the values of each $c_i$, we utilize the observation that for any polynomial $P(x)$ of degree less than $k$, the successive differences, i.e., $P(x)-P(x-1)$ is a polynomial of degree less than $k-1$.
We can repeatedly take successive differences $k$ times to get a linear combination of $P(0), P(1), \dots, P(k)$ that equals $0$ for any polynomial $P$ of degree less than $k$. We can therefore set $c_i := a_i-b_i$ to be these linear coefficients. Unfortunately, we will have $\sum c_i = 0$ as well. Instead, we replace $c_i$ with $c_i' := c_i/(1 + i \gamma)$, so that $\sum c_i' \cdot (1 + i \cdot \gamma)^\ell = \sum c_i \cdot (1 + i \cdot \gamma)^{\ell-1} = 0$ for all $1 \le \ell \le k$. However, $\sum c_i' = \sum c_i/(1+i \cdot \gamma)$ is not expressible as $\sum c_i P(i)$ for a polynomial $P$ of low degree, and will in fact be nonzero, which is exactly what we want. We scale the $c_i'$ terms so that $\sum c_i' = \gamma^k \cdot n.$ If we write $a_i = \max(c_i', 0)$ and $b_i = \max(-c_i', 0)$, then every $a_i$ and $b_i$ is nonnegative but $a_i-b_i = c_i'$. One can show via some careful combinatorics that after this scaling, $\sum a_i$ and $\sum b_i$ are both $\Theta(n)$, as desired.



\section{Sum Estimation}
\label{sec:sum-estimation}

We now give the algorithm for the sum estimation problem. We then state our main result (\Cref{thm:estimate-sum}) as well as the special case for estimating the sum of $0$-$1$ values (\Cref{cor:zero-one-sum}) that we discussed in the introduction. We then state and prove \Cref{lem:estimate-sum} which we then use to prove \Cref{thm:estimate-sum}.




\begin{algorithm}
    $(X_j)_{j \in [\ssize]} \leftarrow$ take $m$ samples from the distribution $\calQ$\\
    For every $i \in [\usize]$, let $Y_i$ denote the number of times value $i$ was sampled $\;\;\;\;$ ($Y_i=\sum_{j \in [\ssize]} [X_j = i]$ ) \\
    For every $h \in [k]$, let $\xi_h$ be the $h$-wise collision estimator $\qquad\qquad\qquad\qquad\;$ \big($\xi_h = \frac{1}{\binom \ssize h} \sum_{i = 1}^{\usize} \binom{Y_i}{h} \frac{x_i - \calP(i)W}{(\calP(i))^h} $ \big)\\
    \Return{$W + \sum_{h = 1}^{k} (-1)^{h + 1} \binom{k}{h} \xi_h$}

    \caption{$EstimateSum(m, k, W)$}
\end{algorithm}

\begin{algorithm}
    $W \leftarrow EstimateSum(t, 1, 0)$\\
    \Return{$EstimateSum(m, k, W)$}
    
    \caption{$ImprovedEstimateSum(m, t, k)$}
    \label{alg:improved-estimate}
\end{algorithm}

We now state our main theorem. Note that this implies, by a simple substitution, the simpler version mentioned in the introduction as \Cref{thm:intro-estimate-sum}.

\begin{theorem}
  \label{thm:estimate-sum}
    Define $\esize = \max_i 1 / \calP(i)$.
    Suppose $\calQ$ is pointwise $\gamma$-close to $\calP$, and let $\Varht$ denote the variance of the Hansen-Hurwitz estimator (Equation~\ref{eqn:ht-var}). For $\eps_1,\eps_2 > 0$, define $k = \lceil (\lg \eps_1)/\lg \gamma)\rceil$. Then using
    \[
    m = O\left(\sqrt[k]{n^{k-1} \eps_2^{-2} \Varht} \right)
    \]
     independent samples from $\calQ$, with probability at least $2/3$, Algorithm~\ref{alg:improved-estimate} produces an estimate $\hat\mu$ of $\mu = \sum_i x_i$ 
     with absolute error
     \[
    \abs{\mu - \hat\mu} \leq \eps_1(1 + \gamma) \E_{X \sim \calP}[|\calP(X)^{-1}x_X - \mu|] + \eps_2
    \]
\end{theorem}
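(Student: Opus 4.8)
The plan is to analyze the two-stage estimator in Algorithm~\ref{alg:improved-estimate} by controlling bias and variance separately, then applying Chebyshev's inequality. First I would set up the key quantity $\sigma^2 := \E_{X\sim\calP}[(\calP(X)^{-1}x_X - \mu)^2] = \Varht \cdot \ssize$... more precisely, note that $\sum_i \calP(i)(x_i/\calP(i))^2 = \Varht \cdot \ssize + \mu^2$, so the ``centered'' sum $\sum_i \calP(i)(x_i/\calP(i) - W)^2$ equals $\Varht \cdot \ssize + (\mu - W)^2$ plus lower-order terms depending on the $\alpha_i$'s. The first stage runs $EstimateSum(t,1,0)$, which is exactly the Hansen--Hurwitz estimator $\xi_1$; by Theorem~\ref{thm:hansen-hurwitz} (adjusted for the $\calP\neq\calQ$ bias of $\le\gamma\mu_+$ and the $1+O(\gamma)$ variance blowup) together with Chebyshev, I would argue that with probability, say, $9/10$, the output $W$ satisfies $|W - \mu| = O(\sqrt{\Varht \cdot \ssize / t} + \gamma\mu_+)$. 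Choosing $t = \Theta(\ssize)$ within the sample budget, this gives $|W-\mu| = O(\sqrt{\Varht} + \gamma\mu_+)$, which is the regime needed for the variance of the second stage to be controlled.

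Next I would analyze the second stage, $EstimateSum(m,k,W)$, which computes $W + \estimator_k^{(\bar x)}$ where $\estimator_k^{(\bar x)}$ is the bias-reducing estimator of order $k$ applied to the shifted values $\bar x_i = x_i - \calP(i)W$. For the bias: by Theorem~\ref{thm:estimator-bias-reduction} applied to the $\bar x_i$, conditioned on $W$, $|\E[\estimator_k^{(\bar x)}] - (\mu - W)| \le \gamma^k \bar\mu_+$ where $\bar\mu_+ = \sum_i |x_i - \calP(i)W|$. I need to relate $\bar\mu_+$ to the quantity $(1+\gamma)\,\ssize\,\E_{X\sim\calP}[|\calP(X)^{-1}x_X - \mu|]$ appearing in the statement; the point is $\bar\mu_+ = \sum_i |x_i - \calP(i)W| = \sum_i \calP(i)|\calP(i)^{-1}x_i - W| \le \sum_i \calP(i)|\calP(i)^{-1}x_i - \mu| + \sum_i \calP(i)|\mu - W| = \ssize\,\E_{X\sim\calP}[|\calP(X)^{-1}x_X - \mu|] + |W-\mu|$, and the wasteful-looking $(1+\gamma)$ factor absorbs the $|W-\mu|$ slack together with the passage from $\gamma^k$ to $\eps_1$ (recall $\gamma^k \le \eps_1$ by the choice $k = \lceil \lg\eps_1/\lg\gamma\rceil$). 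So the bias contribution to $|\mu - \hat\mu|$ is at most $\eps_1(1+\gamma)\E_{X\sim\calP}[|\calP(X)^{-1}x_X-\mu|]$ up to the additive slack that will be folded into the $\eps_2$ term.

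For the variance of the second stage I would invoke the polynomial decomposition sketched in the technical overview: writing $\estimator_k^{(\bar x)} - \E[\estimator_k^{(\bar x)}] = \sum_i \bar x_i Q_i([X_1=i]-\calQ(i),\dots,[X_m=i]-\calQ(i))$ with the explicit $Q_i$ given there, and using that each coordinate $[X_j=i]-\calQ(i)$ is zero-mean and that $Q_i$ is multilinear, one computes $\Var[\estimator_k^{(\bar x)}]$ term by term. The dominant term is $h=1$, contributing $\Theta\!\big(\frac{1}{m}\sum_i \calP(i)(\bar x_i/\calP(i))^2\big) = \Theta\!\big(\frac{1}{m}(\Varht\ssize + (\mu-W)^2)\big)$; since $|W-\mu|^2 = O(\Varht + \gamma^2\mu_+^2)$ from stage one, this is $O(\frac{n}{m}(\Varht + \text{lower order}))$ using $\calP(i)^{-1}\le n$. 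Higher $h$ terms carry extra factors of $n^{h-1}/\binom{m}{h}$ and powers of $\gamma_i^{k-h}$, and I would check they are dominated once $m = O((n^{k-1}\eps_2^{-2}\Varht)^{1/k})$ — this bookkeeping over $h=1,\dots,k$ is the main technical obstacle, since one must verify that with this choice of $m$ every term is $O(\eps_2^2)$. Finally, Chebyshev on the second stage gives that conditioned on a good $W$, $|\hat\mu - \E[\hat\mu]| \le \eps_2$ with probability $9/10$; a union bound over the two stages yields the overall $2/3$ success probability, and combining with the bias bound completes the proof.
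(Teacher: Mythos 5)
Your overall architecture is exactly the paper's: stage one runs the Hansen--Hurwitz estimator to get a coarse estimate $W$, stage two runs the bias-reducing estimator on the shifted values $x_i - \calP(i)W$, the bias is controlled by applying the order-$k$ bias bound to the shifted values together with $\sum_i|x_i-\calP(i)W| \le \sum_i\calP(i)|\calP(i)^{-1}x_i-\mu| + |W-\mu|$, the variance is controlled via the multilinear decomposition of $\estimator_k - \E[\estimator_k]$ (Lemma~\ref{lem:expression-mean-zero}), and Chebyshev plus a union bound finishes. So the route is the same; the issue is that the two steps you defer or assert are precisely where the content of the paper's proof lies.

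First, the variance bookkeeping over $h = 1,\dots,k$, which you flag as ``the main technical obstacle'' and leave unresolved, is the heart of Lemma~\ref{lem:estimate-sum}: the paper bounds the $h$-th term by (up to constants) $\frac{k^{2h}}{\binom{\ssize}{h}}\gamma^{2k-2h}(1+\gamma)^h\sum_i x_i^2\,\calP(i)^{-h}$, observes that this is log-convex in $h$, and therefore only needs to control the two endpoints $h=1$ and $h=k$ --- this is how the $\max$ of two terms in the variance bound of Lemma~\ref{lem:estimate-sum} arises. Relatedly, your claim that ``the dominant term is $h=1$'' is off: the endpoint that forces $m = \Theta\big(\sqrt[k]{\esize^{k-1}\eps_2^{-2}\Varht}\big)$ is $h=k$, whose contribution scales like $\esize^{k-1}\sum_i\calP(i)^{-1}x_i^2/m^k$, while the $h=1$ term is damped by the factor $\gamma^{2k-2}$. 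Second, the residual bias $\gamma^k|W-\mu|$ is not fully absorbed by the $(1+\gamma)$ factor: only the piece $\gamma^{k+1}\sum_i\calP(i)|\calP(i)^{-1}x_i-\mu|$ goes there, while the piece $\gamma^k\cdot O\big(\sqrt{\sum_i\calP(i)(\calP(i)^{-1}x_i-\mu)^2/t}\,\big)$ must be driven below $\eps_2/2$; this is why the paper takes $t = O(1+\gamma^{2k}\eps_2^{-2}\Varht)$ first-stage samples, whereas with your choice $t=\Theta(\ssize)$ this term is not automatically $O(\eps_2)$.
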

This theorem implies in a straightforward way a solution to the problem of estimating the sum of $0$-$1$ values that we discussed in the introduction.
\begin{corollary}\label{cor:zeros_and_ones}
    Suppose $\calQ$ is pointwise $\gamma$-close to the uniform distribution over $[\usize]$ and $x_i \in \set{0, 1}$ for every $i \in [\usize]$. For any $\eps > 0$ define $k = \lceil (\log \eps) / \log \gamma \rceil$. Then $m = O(n^{1 - 1/k} \eps^{-2/k})$ samples are sufficient to obtain an estimate of $\mu = \sum_{i} x_i$ with additive error $\eps (\mu + \sqrt{\mu\usize})$ with probability $2/3$. 
  \end{corollary}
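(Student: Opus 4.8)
The plan is to derive Corollary~\ref{cor:zeros_and_ones} as a direct instantiation of \Cref{thm:estimate-sum}, taking $\calP$ to be the uniform distribution on $[\usize]$ and choosing the free parameters $\eps_1,\eps_2$ so that the $\Varht$-dependence cancels inside the $k$-th root while the target error emerges. Since $\calP$ is uniform, $\calP(i)=1/\usize$ for all $i$, hence $\esize=\max_i 1/\calP(i)=\usize$, i.e.\ $n=\usize$; and taking $\eps_1=\eps$ makes $k=\lceil(\lg\eps_1)/\lg\gamma\rceil$ coincide with the $k$ in the statement.

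The next step is to evaluate the two data-dependent quantities appearing in \Cref{thm:estimate-sum} for this special case. For the Hansen--Hurwitz variance, using $x_i\in\set{0,1}$ (so $x_i^2=x_i$) and $\calP$ uniform,
\[
\sum_{i\in[\usize]}\calP(i)\paren{\frac{x_i}{\calP(i)}-\mu}^2
  = \frac1\usize\sum_{i\in[\usize]}(\usize x_i-\mu)^2
  = \usize\mu-\mu^2 = \mu(\usize-\mu),
\]
and since $\calQ$ is pointwise $\gamma$-close to $\calP$ the corresponding sum computed with $\calQ$ is within a $1\pm O(\gamma)$ factor, so $\Varht=O(\mu\usize)$ and $\sqrt{\Varht}=O(\sqrt{\mu\usize})$. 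For the error-scaling expectation, $X\sim\calP$ is uniform with $\Pr[x_X=1]=\mu/\usize$, so $\calP(X)^{-1}x_X$ equals $\usize$ with probability $\mu/\usize$ and $0$ otherwise, giving
\[
\E_{X\sim\calP}\bigl[\,\abs{\calP(X)^{-1}x_X-\mu}\,\bigr]
  = \frac{\mu}{\usize}(\usize-\mu)+\frac{\usize-\mu}{\usize}\,\mu
  = \frac{2\mu(\usize-\mu)}{\usize}\ \le\ 2\mu .
\]

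Finally one sets $\eps_1=\eps$ and $\eps_2=c\,\eps\sqrt{\mu\usize}$ for a suitable absolute constant $c$, and substitutes. Then $\eps_2^{-2}\Varht=O(\eps^{-2})$, so the sample bound of \Cref{thm:estimate-sum} becomes
\[
m=O\!\paren{\sqrt[k]{n^{k-1}\eps_2^{-2}\Varht}}
 =O\!\paren{\sqrt[k]{\usize^{k-1}\eps^{-2}}}
 =O\!\paren{n^{1-1/k}\eps^{-2/k}},
\]
while the error guarantee becomes
\[
\abs{\mu-\hat\mu}\ \le\ \eps(1+\gamma)\cdot\frac{2\mu(\usize-\mu)}{\usize}+c\,\eps\sqrt{\mu\usize}
 \ \le\ 4\eps\mu+c\,\eps\sqrt{\mu\usize}
 \ =\ O\bigl(\eps(\mu+\sqrt{\mu\usize})\bigr),
\]
using $\gamma<1$ and $\mu\le\sqrt{\mu\usize}$ (which holds since $\mu\le\usize$); the absolute constant is absorbed into the bound $\eps(\mu+\sqrt{\mu\usize})$ of the statement (equivalently, eliminated by rescaling $\eps$ by a constant).

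There is no genuine obstacle here: the argument is a substitution into \Cref{thm:estimate-sum}. The only points requiring care are (i) correctly evaluating $\Varht$ and $\E_{X\sim\calP}[\abs{\calP(X)^{-1}x_X-\mu}]$ for the uniform-$\calP$, $0$--$1$-valued instance, and (ii) picking $\eps_2=\Theta(\eps\sqrt{\mu\usize})$ so that the $\Varht$ factor exactly cancels inside the $k$-th root, leaving $\usize^{1-1/k}\eps^{-2/k}$ samples and a total error of order $\eps\sqrt{\mu\usize}$, which dominates the $\Theta(\eps\mu)$ contribution. One should also keep in mind that the dependence on $\gamma$ is hidden in the implied constants.
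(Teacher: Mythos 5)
Your proposal is correct and matches the paper's (unwritten) argument: the paper merely asserts that Corollary~\ref{cor:zeros_and_ones} follows from \Cref{thm:estimate-sum} ``in a straightforward way,'' and your instantiation with $\calP$ uniform (so $\esize=\usize$), $\eps_1=\eps$, $\eps_2=\Theta(\eps\sqrt{\mu\usize})$, together with the computations $\Varht=\Theta(\mu(\usize-\mu))$ and $\E_{X\sim\calP}[|\calP(X)^{-1}x_X-\mu|]\le 2\mu$, is exactly that substitution. The only caveat is the constant factor in the final error (you get roughly $4\eps\mu+c\,\eps\sqrt{\mu\usize}$ rather than literally $\eps(\mu+\sqrt{\mu\usize})$, and rescaling $\eps_1$ by a constant could in principle change $k$), but the paper treats its error guarantees up to such constants throughout (e.g., when deducing Corollary~\ref{cor:zero-one-sum} from Corollary~\ref{cor:zeros_and_ones}), so this is not a gap relative to the paper's own standard.
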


Before we can prove our main result, we need the following lemma.
\begin{lemma}[Analysis of $EstimateSum(m, k, 0)$]
    \label{lem:estimate-sum}
    Define $\esize = \max_i 1 / \calP(i)$. Let $(x_i)_{i \in [\usize]}$ be a sequence of numbers and define $\mu = \sum_{i \in [\usize]} x_i$.
    Let $\calP$ be a probability distribution over $[\usize]$, and let $\calQ$ be another probability distribution over $[\usize]$ that is pointwise $\gamma$-close to $\calP$. 

    Consider a sequence $(X_j)_{j \in [\ssize]}$ of independent random variables both with distribution $\calQ$.
    Define $Y_i = \sum_{j \in [\ssize]} [X_j = i]$ for every $i \in [\usize]$. 
    Define
    \begin{align*}
        \estimator_k = \sum_{h = 1}^{k} (-1)^{h + 1} \frac{\binom{k}{h}}{\binom{\ssize}{h}} \sum_{i \in [\usize]} \binom{Y_i}{h} \calP(i)^{-h} x_i
            \; .
    \end{align*}
    Then for $k \ge 2$, we get that
    \begin{align}
        |\E[\estimator_k] - \mu| &\le \gamma^k \sum_{i \in [\usize]} |x_i| \label{eq:expectation} \\
        \Var[\estimator_k] &\le  \max\!\left\{2 (1 + \gamma) \gamma^{2k - 2} k^2 \frac{\sum_{i \in [\usize]} \calP(i)^{-1} x_i^2 }{\ssize}, 2^{k} (1 + \gamma)^k  k^{3k} \frac{\esize^{k - 1} \sum_{i \in [\usize]} \calP(i)^{-1} x_i^2 }{m^k} \right\} \label{eq:variance}
            \; ,
    \end{align}
    and for $k = 1$, we get that
    \begin{align}
        |\E[\estimator_1] - \mu| &\le \gamma \sum_{i \in [\usize]} |x_i - \calP(i) \mu| \label{eq:expectation-k-1} \\
        \Var[\estimator_1] &\le (1 + \gamma) \frac{\sum_{i \in [\usize]} \calP(i)^{-1} (x_i - \calP(i) \mu)^2 }{\ssize} \label{eq:variance-k-1}
    \end{align}


\end{lemma}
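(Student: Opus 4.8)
The plan is to treat the expectation and the variance separately, and within the variance the cases $k\ge 2$ and $k=1$. For the expectation, I would expand $Y_i=\sum_{j\in[\ssize]}[X_j=i]$, so that $\binom{Y_i}{h}=\sum_{I\subseteq[\ssize],\,|I|=h}\prod_{j\in I}[X_j=i]$ and, since the $X_j$ are independent and distributed as $\calQ$ and each such monomial is a product of $h$ distinct indicators, $\E[\binom{Y_i}{h}]=\binom{\ssize}{h}\calQ(i)^h$. Writing $\calQ(i)=(1+\alpha_i)\calP(i)$ with $|\alpha_i|\le\gamma$ gives $\E[\estimator_k]=\sum_{i\in[\usize]}x_i\sum_{h=1}^k(-1)^{h+1}\binom kh(1+\alpha_i)^h$, whose inner sum equals $1-\bigl(1-(1+\alpha_i)\bigr)^k=1+(-1)^{k+1}\alpha_i^k$ by \cref{eq:bias-identity}; hence $\abs{\E[\estimator_k]-\mu}=\abs{\sum_i x_i\alpha_i^k}\le\gamma^k\sum_i\abs{x_i}$. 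For $k=1$ I would additionally use $\sum_i\alpha_i\calP(i)=\sum_i(\calQ(i)-\calP(i))=0$ to replace $x_i$ by $x_i-\calP(i)\mu$ inside $\sum_i\alpha_i x_i$, which yields the sharper $k=1$ estimate. The variance for $k=1$ is then immediate, since $\estimator_1=\frac1{\ssize}\sum_j x_{X_j}/\calP(X_j)$ is an average of independent terms: $\Var[\estimator_1]=\tfrac1{\ssize}\Var\!\bigl[(x_{X_1}-\calP(X_1)\mu)/\calP(X_1)\bigr]\le\tfrac1{\ssize}\sum_i\calQ(i)(x_i-\calP(i)\mu)^2/\calP(i)^2\le(1+\gamma)\tfrac1{\ssize}\sum_i(x_i-\calP(i)\mu)^2/\calP(i)$.

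The substantial work is the variance for $k\ge2$. Following the Technical Overview, I would write $\estimator_k=\sum_{i}x_i P_i([X_1=i],\dots,[X_{\ssize}=i])$, where $P_i(\beta)=\sum_{h=1}^k(-1)^{h+1}\frac{\binom kh}{\binom\ssize h}\calP(i)^{-h}e_h(\beta)$ and $e_h$ is the $h\th$ elementary symmetric polynomial; since $P_i$ has degree $1$ in each variable and the $[X_j=i]$ are Bernoulli, $\E[P_i(\cdot)]=P_i(\calQ(i),\dots,\calQ(i))$. Using the shift expansion $e_h(q+\delta_1,\dots,q+\delta_{\ssize})=\sum_{\ell=0}^h\binom{\ssize-\ell}{h-\ell}q^{h-\ell}e_\ell(\delta)$, the identity $\frac{\binom kh}{\binom\ssize h}\binom{\ssize-\ell}{h-\ell}=\frac{\binom k\ell}{\binom\ssize\ell}\binom{k-\ell}{h-\ell}$, and $\sum_{j\ge0}(-1)^j\binom{k-\ell}{j}(1+\alpha_i)^j=(-\alpha_i)^{k-\ell}$, I would verify that $P_i(\beta)-P_i(q,\dots,q)=Q_i(\beta-q)$ for the multilinear $Q_i(\beta)=(-1)^{k+1}\sum_{\ell=1}^k\frac{\binom k\ell}{\binom\ssize\ell}\frac{\alpha_i^{k-\ell}}{\calP(i)^\ell}e_\ell(\beta)$ stated in the overview. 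This gives $\estimator_k-\E[\estimator_k]=\sum_i x_i Q_i(\delta^{(i)})$ with $\delta^{(i)}_j=[X_j=i]-\calQ(i)$, mean-zero and independent across $j$.

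Now $\Var[\estimator_k]=\sum_{i,i'}x_i x_{i'}\E[Q_i(\delta^{(i)})Q_{i'}(\delta^{(i')})]$, and since any index occurring exactly once in a monomial contributes a mean-zero factor, only matched index sets survive, so $\Var[\estimator_k]=\sum_{\ell=1}^k\frac{\binom k\ell^2}{\binom\ssize\ell}T_\ell$ where $T_\ell$ is a quadratic form in $(x_i)$ with diagonal coefficients $\alpha_i^{2(k-\ell)}\calP(i)^{-2\ell}(\calQ(i)(1-\calQ(i)))^\ell$ (from $\E[(\delta^{(i)}_j)^2]$) and off-diagonal coefficients $(-1)^\ell\alpha_i^{k-\ell}\alpha_{i'}^{k-\ell}\calP(i)^{-\ell}\calP(i')^{-\ell}(\calQ(i)\calQ(i'))^\ell$ (from $\E[\delta^{(i)}_j\delta^{(i')}_j]=-\calQ(i)\calQ(i')$ for $i\ne i'$). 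Completing the square rewrites the off-diagonal part as $(-1)^\ell\bigl(\sum_i x_i\alpha_i^{k-\ell}(1+\alpha_i)^\ell\bigr)^2$ minus a diagonal term: for odd $\ell$ the square is subtracted and is dropped, and for even $\ell$ it is controlled by Cauchy--Schwarz against $\sum_i x_i^2/\calP(i)$. Using $(1-q)^\ell+q^\ell\le 1$ and $\abs{(1-q)^\ell-q^\ell}\le1$ for $q\in[0,1]$ to remove the residual $\calQ$-dependence, together with $\calP(i)^{-\ell}\le\esize^{\ell-1}\calP(i)^{-1}$, $\binom k\ell\le k^\ell$ and $\binom\ssize\ell\ge(\ssize/\ell)^\ell$, each summand is at most a constant times $k^{3\ell}(1+\gamma)^{2\ell}\gamma^{2(k-\ell)}\esize^{\ell-1}\ssize^{-\ell}\sum_i x_i^2/\calP(i)$, with the $\ell=1$ term enjoying the sharper bound $(1+\gamma)\gamma^{2k-2}k^2\ssize^{-1}\sum_i x_i^2/\calP(i)$. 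These bounds form a geometric progression in $\ell$ with ratio $\Theta\!\bigl(k^3(1+\gamma)^2\esize/(\gamma^2\ssize)\bigr)$, so — up to constant factors — the whole sum is controlled by its two endpoints $\ell=1$ and $\ell=k$, which are exactly the two arguments of the $\max$ in \eqref{eq:variance}, once one makes a short case split according to whether that ratio is above or below a suitable constant.

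The hard part is this variance analysis for $k\ge2$, and it rests on two combinatorial steps. First is verifying the closed form of $Q_i$ via the elementary-symmetric shift identity and the binomial cancellations it triggers; tracking the signs and making the sum over $h$ telescope to $(-\alpha_i)^{k-\ell}$ is the ``delicate'' point flagged in the overview. Second is reassembling the per-$\ell$ contributions and showing the intermediate terms are swamped by the two endpoints, which is what produces the $\max$ form (and its non-uniformity) and what demands careful tracking of the $k$-, $(1+\gamma)$-, $\esize$- and $\ssize$-exponents; the $i\ne i'$ cross-term bookkeeping and the sign behavior of the completed square (helpful for odd $\ell$, harmless for even $\ell$) are the other places where care is needed.
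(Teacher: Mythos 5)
Your proposal is correct and follows essentially the same route as the paper: the same computation of $\E[\binom{Y_i}{h}]=\binom{\ssize}{h}\calQ(i)^h$ and binomial telescoping for the bias, the same centering by $\calP(i)\mu$ for $k=1$, and for $k\ge 2$ the same multilinear rewriting $\estimator_k-\E[\estimator_k]=\sum_i x_i Q_i([X_1=i]-\calQ(i),\ldots)$ (your elementary-symmetric shift expansion plus the identity $\tfrac{\binom kh}{\binom \ssize h}\binom{\ssize-\ell}{h-\ell}=\tfrac{\binom k\ell}{\binom \ssize\ell}\binom{k-\ell}{h-\ell}$ is exactly the content of the paper's Lemmas~\ref{lem:prod-mean-zero} and~\ref{lem:expression-mean-zero}), followed by the observation that cross terms between distinct index sets vanish and an endpoint-domination argument over $h$ (the paper phrases yours via log-convexity and a $\max_h 2^h(\cdot)$ trick). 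The one place you genuinely diverge is the treatment of the $i\neq i'$ cross terms within a fixed index set $I$: the paper first bounds the centered square by the uncentered one, $\E[(\sum_i c_i\prod_{j\in I}(\calP(i)^{-1}[X_j=i]-(1+\gamma_i)))^2]\le\E[(\sum_i c_i\prod_{j\in I}\calP(i)^{-1}[X_j=i])^2]$, after which the cross terms die identically because $[X_j=i][X_j=i']=0$; you instead keep the centered form, compute $\E[\delta_j^{(i)}\delta_j^{(i')}]=-\calQ(i)\calQ(i')$, and control the resulting signed square by sign considerations and Cauchy--Schwarz. Your version is more transparent about where the covariance structure goes but pays roughly an extra factor $(1+\gamma)^{\ell}$ per term (e.g.\ $(1+\gamma)(2+\gamma)$ in place of $2(1+\gamma)$ at $\ell=1$), so strictly you prove \eqref{eq:variance} only up to a slightly larger absolute constant --- harmless for every downstream use, but worth noting if you want the constants as stated.
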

\begin{proof}
    For each $i \in [\usize]$ we define $\gamma_i$ such that $\calQ(i) = (1 + \gamma_i)\calP(i)$.
    Since we know that $\calQ$ is pointwise $\gamma$-close to $\calP$ then $|\gamma_i| \le \gamma$,
    and since both $\calQ$ and $\calP$ are probability distributions then $\sum_{i \in [\usize]} \gamma_i \calP(i) = 0$.

    We start by proving the bounds on the expectation, i.e., \cref{eq:expectation} and \cref{eq:expectation-k-1}.
    \begin{align*}
        \E[\estimator_k]
            &= \E\left[\sum_{h = 1}^{k} (-1)^{h + 1} \frac{\binom{k}{h}}{\binom{\ssize}{h}} \sum_{i \in [\usize]} \binom{Y_i}{h} \calP(i)^{-h} x_i\right]
            \\&= \sum_{h = 1}^{k} (-1)^{h + 1} \frac{\binom{k}{h}}{\binom{\ssize}{h}} \sum_{i \in [\usize]} \E\left[\binom{Y_i}{h}\right] \calP(i)^{-h} x_i  
        \; .
    \end{align*}
    We use the fact that $Y_i = \sum_{j \in [\ssize]} [X_j = i]$ is a sum of 0-1 variables so $\binom{Y_i}{h} = \sum_{I \subseteq [\ssize] : |I| = h} \prod_{j \in I} [X_j = i]$.
    This implies that $\E\left[\binom{Y_i}{h}\right] = \binom{\ssize}{h} \calQ(i)^h = \binom{\ssize}{h} (1 + \gamma_i)^h \calP(i)^h$.
    Plugging this in, we get that
    \begin{align*}
        \E[\estimator_k]
            &= \sum_{h = 1}^{k} (-1)^{h + 1} \frac{\binom{k}{h}}{\binom{\ssize}{h}} \sum_{i \in [\usize]} \binom{\ssize}{h} (1 + \gamma_i)^h \calP(i)^h \calP(i)^{-h} x_i 
            \\&= \sum_{h = 1}^{k} (-1)^{h + 1} \binom{k}{h} \sum_{i \in [\usize]} (1 + \gamma_i)^{h} x_i
            \\&= \sum_{i \in [\usize]} x_i \sum_{h = 1}^{k} (-1)^{h + 1} \binom{k}{h} (1 + \gamma_i)^{h}
            \\&= \sum_{i \in [\usize]} x_i \left(1 + \sum_{h = 0}^{k} (-1)^{h + 1} \binom{k}{h} (1 + \gamma_i)^{h} \right)
            \\&= \sum_{i \in [\usize]} x_i \left(1 - (1 - (1 + \gamma_i))^k \right)
            \\&= \sum_{i \in [\usize]} x_i \left(1 + (-1)^{k + 1} \gamma_i^k \right)
    \end{align*}
    If $k \ge 2$ then we easily see that
    \begin{align*}
        |\E[\estimator_k] - \mu|
            = |\sum_{i \in [\usize]} x_i - \sum_{i \in [\usize]} x_i (1 + (-1)^{k + 1} \gamma_i^k ) |
            = |\sum_{i \in [\usize]} \gamma_i^k x_i|
            \le \gamma^k \sum_{i \in [\usize]} |x_i|
    \end{align*}
    For $k = 1$ we will exploit that $\sum_{i \in [\usize]} \gamma_i \calP(i) = 0$.
    \begin{align*}
        |\E[\estimator_1] - \mu|
            &= |\sum_{i \in [\usize]} \gamma_i x_i|
            = |\sum_{i \in [\usize]} \gamma_i (\calP(i)\mu + (x_i - \calP(i)\mu))|
            \\&= |\sum_{i \in [\usize]} \gamma_i (x_i - \calP(i)\mu)|
            \le \gamma \sum_{i \in [\usize]} |x_i - \calP(i)\mu|
    \end{align*}

    Now we will focus on bounding the variance.
    First we prove \cref{eq:variance-k-1}.
    \begin{align*}
        \Var[\estimator_1]
            &= \E\left[\left(\frac{1}{m} \sum_{j \in [\ssize]}  \sum_{i \in [\usize]} x_i ([X_j = i]\calP(i)^{-1} - (1 + \gamma_i))  \right)^2 \right]
            \\&= \frac{1}{m^2} \sum_{j \in [\ssize]} \E\left[\left( \sum_{i \in [\usize]} x_i ([X_j = i]\calP(i)^{-1} - (1 + \gamma_i))  \right)^2 \right]
    \end{align*}
    We will argue that $\sum_{i \in [\usize]} x_i ([X_j = i]\calP(i)^{-1} - (1 + \gamma_i)) = \sum_{i \in [\usize]} (x_i - \calP(i)\mu) ([X_j = i]\calP(i)^{-1} - (1 + \gamma_i))$.
    \begin{align*}
        \sum_{i \in [\usize]} x_i ([X_j = i]\calP(i)^{-1} - (1 + \gamma_i))
            &= \sum_{i \in [\usize]} (x_i - \calP(i)\mu + \calP(i)\mu) ([X_j = i]\calP(i)^{-1} - (1 + \gamma_i))
            \\&= \sum_{i \in [\usize]} (x_i - \calP(i)\mu) ([X_j = i]\calP(i)^{-1} - (1 + \gamma_i))
                + \mu \sum_{u \in [\usize]} \calP(i)\gamma_i
    \end{align*}
    Now since $\sum_{i \in [\usize]} \gamma_i \calP(i) = 0$ it follows.
    We can now bound the variance.
    \begin{align*}
        \Var[\estimator_1]
            &= \frac{1}{m^2} \sum_{j \in [\ssize]} \E\left[\left( \sum_{i \in [\usize]} (x_i - \calP(i)\mu) ([X_j = i]\calP(i)^{-1} - (1 + \gamma_i))  \right)^2 \right]
            \\&\le \frac{1}{m^2} \sum_{j \in [\ssize]} \E\left[\left( \sum_{i \in [\usize]} \calP(i)^{-1} (x_i - \calP(i)\mu) [X_j = i]  \right)^2 \right]
            \\&= \frac{1}{m} \sum_{i \in [\usize]} \frac{\calQ(i)}{\calP(i)^2} (x_i - \calP(i)\mu)^2
            \\&\le \frac{1 + \gamma}{m} \sum_{i \in [\usize]} \calP(i)^{-1} (x_i - \calP(i)\mu)^2
    \end{align*}

    Now we just need to focus on the case of $k \ge 2$ and prove \cref{eq:variance}.
    For this we need the following lemma for which we defer the proof till \Cref{sec:proof-technical}.
    \begin{lemma}\label{lem:expression-mean-zero}
        For all sequences of numbers $(\beta_j)_{j \in [\ssize]}$ and all $\alpha$ the following identity holds:
        \begin{align*}
            \sum_{\substack{I \subseteq [\ssize]\\0 < |I| \le k}} (-1)^{|I| + 1} \frac{\binom{k}{|I|}}{\binom{\ssize}{|I|}} \left(\prod_{j \in I} \beta_j - (1 + \alpha)^h\right)
                = (-1)^{k + 1} \sum_{\substack{I \subseteq [\ssize]\\0 < |I| \le k}} \frac{\binom{k}{|I|}}{\binom{m}{|I|}} \alpha^{k - |I|} \prod_{j \in I} (\beta_j - (1 + \alpha))
        \end{align*}
    \end{lemma}

    The idea is to use \Cref{lem:expression-mean-zero} to prove that
    \begin{align}\label{eq:rewriting-for-variance}
        \estimator_k - \E[\estimator_k]  = \sum_{\substack{I \subseteq [\ssize]\\0 < |I| \le k}} \frac{\binom{m - |I|}{k - |I|}}{\binom{m}{k}} \alpha^{k - |I|}  \sum_{i \in [\usize]} x_i \prod_{j \in I} (\calP(i)^{-1}[X_j = i] - (1 + \gamma_i)) \; .
    \end{align}
    First we use that $\E[\estimator_k] = \sum_{h = 1}^{k} (-1)^h \frac{\binom{k}{h}}{\binom{m}{h}} \sum_{i \in [\usize]} \calP(i)^{-h} x_i \E\left[\binom{Y_i}{h}\right]$ which allow us to rewrite $\estimator_k - \E[\estimator_k]$.
    \begin{align*}
        \estimator_k - \E[\estimator_k] = \sum_{h = 1}^{k} (-1)^h \frac{\binom{k}{h}}{\binom{m}{h}} \sum_{i \in [\usize]} \calP(i)^{-h} x_i \left( \binom{Y_i}{h} - \E\left[\binom{Y_i}{h}\right] \right)
    \end{align*}
    We now again use that $Y_i = \sum_{j \in [\ssize]} [X_j = i]$ is a sum of 0-1 variables so $\binom{Y_i}{h} = \sum_{I \subseteq [\ssize] : |I| = h} \prod_{j \in I} [X_j = i]$ and $\E\left[\binom{Y_i}{h}\right] = \binom{\ssize}{h} (1 + \gamma_i)^h \calP(i)^h$.
    \begin{align*}
        &\sum_{h = 1}^{k} (-1)^h \frac{\binom{k}{h}}{\binom{m}{h}} \sum_{i \in [\usize]} \calP(i)^{-h} x_i \left( \binom{Y_i}{h} - \E\left[\binom{Y_i}{h}\right] \right)
            \\&= \sum_{h = 1}^{k} (-1)^h \frac{\binom{k}{h}}{\binom{m}{h}} \sum_{i \in [\usize]} \calP(i)^{-h} x_i \sum_{\substack{I \subseteq [m]\\|I| = h}} \left(\prod_{j \in I}[X_j = i] - (1 + \gamma_i)^h \calP(i)^h \right)
            \\&= \sum_{i \in [\usize]} x_i \sum_{h = 1}^{k} \sum_{\substack{I \subseteq [m]\\|I| = h}} (-1)^h \frac{\binom{k}{h}}{\binom{m}{h}} \calP(i)^{-h}   \left(\prod_{j \in I}[X_j = i] - (1 + \gamma_i)^h \calP(i)^h \right)
            \\&= \sum_{i \in [\usize]} x_i \sum_{\substack{I \subseteq [m]\\0 < |I| \le k}} (-1)^{|I| + 1} \frac{\binom{k}{|I|}}{\binom{m}{|I|}} \calP(i)^{-h}   \left(\prod_{j \in I}[X_j = i] - (1 + \gamma_i)^{|I|} \calP(i)^{|I|} \right)
            \\&= \sum_{i \in [\usize]} x_i \sum_{\substack{I \subseteq [m]\\0 < |I| \le k}} (-1)^{|I| + 1} \frac{\binom{k}{|I|}}{\binom{m}{|I|}} \left(\prod_{j \in I} \calP(i)^{-1} [X_j = i] - (1 + \gamma_i)^{|I|} \right)
    \end{align*}
    For each $i \in [\usize]$, the expression $\sum_{\substack{I \subseteq [m]\\0 < |I| \le k}} (-1)^{|I| + 1} \frac{\binom{k}{|I|}}{\binom{m}{|I|}} \left(\prod_{j \in I} \calP(i)^{-1} [X_j = i] - (1 + \gamma_i)^{|I|} \right)$ is of the form of \Cref{lem:expression-mean-zero}.
    So applying that $\usize$ times we get that
    \begin{align*}
        &\sum_{i \in [\usize]} x_i \sum_{\substack{I \subseteq [m]\\0 < |I| \le k}} (-1)^{|I| + 1} \frac{\binom{k}{|I|}}{\binom{m}{|I|}} \left(\prod_{j \in I} \calP(i)^{-1} [X_j = i] - (1 + \gamma_i)^{|I|} \right)
            \\&= \sum_{i \in [\usize]} x_i (-1)^{k + 1} \sum_{\substack{I \subseteq [m]\\0 < |I| \le k}} \frac{\binom{k}{|I|}}{\binom{m}{|I|}} \gamma_i^{k - |I|} \prod_{j \in I} \left(\calP(i)^{-1} [X_j = i] - (1 + \gamma_i) \right)
            \\&= (-1)^{k + 1} \sum_{\substack{I \subseteq [m]\\0 < |I| \le k}} \frac{\binom{k}{|I|}}{\binom{m}{|I|}} \sum_{i \in [\usize]}  \gamma_i^{k - |I|}  x_i \prod_{j \in I} \left(\calP(i)^{-1} [X_j = i] - (1 + \gamma_i) \right)
    \end{align*}
    This shows that \cref{eq:rewriting-for-variance} is true.

    Now let $I_1, I_2 \subseteq [m]$ be two different set of indices with $0 < |I_1|, |I_2| \le k$.
    Without loss of generality, we can assume that there exists $h \in I_1 \setminus I_2$.
    \begin{align*}
        T = (\sum_{i \in [\usize]} \gamma_i^{k - |I_1|} x_i \prod_{j \in I_1} (\calP(i)^{-1}[X_j = i] - (1 + \gamma_i))) (\sum_{i \in [\usize]} \gamma_i^{k - |I_2|} x_i \prod_{j \in I_2} (\calP(i)^{-1}[X_j = i] - (1 + \gamma_i)))
    \end{align*}
    Note that if we multiply this expression out then every term will contain a factor of the form $(\calP(i)^{-1}[X_h = s] - (1 + \gamma_i))$ for some $s \in [\usize]$ and where all the other factors are independent of $X_h$.
    Since $\E[(\calP(i)^{-1}[X_h = s] - (1 + \gamma_i))] = 0$ we get that $\E[T] = 0$.
    This implies that
    \begin{align*}
        \E[(\estimator_k - \E[\estimator_k])^2]
            = \sum_{\substack{I \subseteq [\ssize]\\0 < |I| \le k}} \left(\frac{\binom{k}{|I|}}{\binom{m}{|I|}} \right)^2  \E[(\sum_{i \in [\usize]} \gamma_i^{k - |I|} x_i \prod_{j \in I} (\calP(i)^{-1}[X_j = i] - (1 + \gamma_i)))^2]
    \end{align*}

    Let $I \subseteq [\usize]$ be fixed then 
    \begin{align*}
        \E[(\sum_{i \in [\usize]} \gamma_{i}^{k - |I|} x_i \prod_{j \in I} (\calP(i)^{-1}[X_j = i] - (1 + \gamma_i)))^2] 
            &\le \E[(\sum_{i \in [\usize]} \gamma_{i}^{k - |I|} x_i \prod_{j \in I} \calP(i)^{-1}[X_j = i])^2] 
            \\&= \sum_{i \in [\usize]} \gamma_{i}^{2k - 2|I|} x_i^2 \frac{\calQ(i)^{|I|}}{\calP(i)^{2|I|}}
            \le \gamma^{2k - 2|I|} \sum_{i \in [\usize]} x_i^2 \frac{(1 + \gamma_i)^{|I|}}{\calP(i)^{|I|}}
    \end{align*}
    Collecting it we get that
    \begin{align*}
        \E[(\estimator_k - \E[\estimator_k])^2]
            &\le \sum_{\substack{I \subseteq [\ssize]\\0 < |I| \le k}} \left(\frac{\binom{k}{|I|}}{\binom{m}{|I|}} \right)^2 \gamma^{2k - 2|I|}  \sum_{i \in [\usize]} x_i^2 \frac{(1 + \gamma_i)^{|I|}}{\calP(i)^{|I|}}
            \\&= \sum_{h = 1}^{k} \frac{\binom{k}{h}^2}{\binom{m}{h}} \gamma^{2k - 2h}  \sum_{i \in [\usize]} x_i^2 \frac{(1 + \gamma_i)^{h}}{\calP(i)^{h}}
            \\&\le \sum_{h = 1}^{k} \frac{\binom{k}{h}^2}{\binom{m}{h}} \gamma^{2k - 2h} (1 + \gamma)^h \sum_{i \in [\usize]} \frac{x_i^2}{\calP(i)^{h}}
            \\&\le \sum_{h = 1}^{k} \frac{k^{2h}}{\binom{m}{h}} \gamma^{2k - 2h} (1 + \gamma)^h \sum_{i \in [\usize]} \frac{x_i^2}{\calP(i)^{h}}
            \\&\le \max_{h = 1}^{k} \frac{2^h k^{2h}}{\binom{m}{h}} \gamma^{2k - 2h} (1 + \gamma)^h \sum_{i \in [\usize]} \frac{x_i^2}{\calP(i)^{h}}
    \end{align*}

    Now we note that for all $i \in [\usize]$, the map $h \mapsto \frac{2^h k^{2h}}{\binom{m}{h}} \gamma^{2k - 2h} (1 + \gamma)^h \frac{x_i^2}{\calP(i)^{h}}$ is log-convex since each factor is log-convex.
    This implies that the map $h \mapsto \frac{2^h k^{2h}}{\binom{m}{h}} \gamma^{2k - 2h} (1 + \gamma)^h \sum_{i \in [\usize]} \frac{x_i^2}{\calP(i)^{h}}$ is convex and is thus maximized at the boundary.
    We then get that
    \begin{align*}
        \E[(\estimator_k - \E[\estimator_k])^2]
            \le \max\!\left\{2 (1 + \gamma) \gamma^{2k - 2} k^2 \frac{\sum_{i \in [\usize]} \calP(i)^{-1} x_i^2 }{\ssize}, 2^{k} (1 + \gamma)^k  k^{3k} \frac{\esize^{k - 1} \sum_{i \in [\usize]} \calP(i)^{-1} x_i^2 }{m^k} \right\}
    \end{align*}
    which finishes the proof of \cref{eq:variance}.
\end{proof}

We are now ready to prove the main theorem.
\begin{proof}[Proof of \Cref{thm:estimate-sum}]
    Let $W$ be an estimate of $\mu$ using $EstimateSum(t, 1, 0)$ where $t = O(1 + \gamma^{2k} \eps_2^{-2} \Varht )$.
    Using \Cref{lem:estimate-sum} we get an estimate of the bias and the variance.
    Now by Chebyshev's inequality, we easily get that $|W - \mu| \leq \gamma\sum_{i \in [\usize]}\calP(i)|\calP(i)^{-1}x_i - \mu| + \max\set{\eps_2/(2\gamma^k), \sqrt{\sum_{i \in [\usize]} \calP(i)(\calP(i)^{-1}x_i - \mu)^2 }} $ with probability 5/6.
    We note that $\sum_{i \in [\usize]}\calP(i)|\calP(i)^{-1}x_i - \mu| \le  \sqrt{\sum_{i \in [\usize]} \calP(i)(\calP(i)^{-1}x_i - \mu)^2 }$ since $p$-norms are increasing.
    So we also get that $|W - \mu| \le O(\sqrt{\sum_{i \in [\usize]} \calP(i)(\calP(i)^{-1}x_i - \mu)^2 })$

    Now we calculate $\estimator$ using $EstimateSum(m, k, W)$ with $m = O\left(\sqrt[k]{n^{k-1} \eps_2^{-2} \Varht} \right)$ so $\estimator$ corresponds to $ImprovedEstimateSum(m, t, k)$.
    We now note that by \Cref{lem:estimate-sum},
    \begin{align*}
        |(\E[\estimator] - W) - (\mu - W)| 
            &\le \gamma^k \sum_{i \in [\usize]} \calP(i)|\calP(i)^{-1}x_i - W|
            \le \gamma^k|W - \mu| + \gamma^k \sum_{i \in [\usize]} \calP(i)|\calP(i)^{-1}x_i - \mu| \\
        \Var[\estimator]
            &\le \max\!\bigg\{2 (1 + \gamma) \gamma^{2k - 2} k^2 \frac{\sum_{i \in [\usize]} \calP(i) (\calP(i)^{-1} x_i - W)^2 }{\ssize}, 
            \\& \qquad\qquad2^{k} (1 + \gamma)^k  k^{3k} \frac{\esize^{k - 1} \sum_{i \in [\usize]} \calP(i) (\calP(i)^{-1} x_i - W)^2 }{m^k} \bigg\}
            \\&= \max\!\bigg\{2 (1 + \gamma) \gamma^{2k - 2} k^2 \frac{\sum_{i \in [\usize]} \calP(i) (\calP(i)^{-1} x_i - \mu)^2 + (W - \mu)^2 }{\ssize}, 
            \\& \qquad\qquad2^{k} (1 + \gamma)^k  k^{3k} \frac{\esize^{k - 1} \left( \sum_{i \in [\usize]} \calP(i) (\calP(i)^{-1} x_i - W)^2 + (W - \mu)^2 \right) }{m^k} \bigg\}
    \end{align*}

    Assuming that $|W - \mu| = \gamma\sum_{i \in [\usize]}\calP(i)|\calP(i)^{-1}x_i - \mu| + \max\set{\eps_1/(2\gamma^k), \sqrt{\sum_{i \in [\usize]} \calP(i)(\calP(i)^{-1}x_i - \mu)^2 }}$.
    We get that $|(\E[\estimator] - W) - (\mu - W)| \le \gamma^k(1 + \gamma)\sum_{i \in [\usize]}\calP(i)|\calP(i)^{-1}x_i - \mu| + \eps_2/2$.
    Using that $|W - \mu| \le O(\sqrt{\sum_{i \in [\usize]} \calP(i)(\calP(i)^{-1}x_i - \mu)^2 })$ we can then use Chebyshev's inequality to conclude that with probability 5/6
    \begin{align*}
        |\estimator - \E[\estimator]| \le \eps_2/2
    \end{align*}
    So all in all, with probability at least 2/3 we that $|\estimator - \mu| \le \eps_1(1 + \gamma)\sum_{i \in [\usize]}\calP(i)|\calP(i)^{-1}x_i - \mu| + \eps_2$.
\end{proof}

\section{Lower Bound}
\label{sec:lower-bound}

In this section, we prove that for a range of parameters---specifically in the setting of Corollary~\ref{cor:zeros_and_ones}---the sample complexity of our sum estimation algorithm (Algorithm~\ref{alg:improved-estimate}) is asymptotically tight.

\begin{theorem}\label{thm:lb}
  Let $k$ be a positive integer and let $\eps < \gamma < 1$ be positive numbers. Suppose $A$ is an algorithm such that for any $v \in \set{0, 1}^\usize$ and any distribution $\calP$ over $[\usize]$ that is pointwise $\gamma$-close to uniform, $A$ uses samples from $\calP$ and returns an estimate of $\lVert v \rVert_1 = \sum_i v_i$ within additive error $\eps \usize$ with probability $2/3$. Then there exists $c_k$ with $0 < c_k < 1$ such that if $\eps \leq c_k \gamma^k$ then $A$ requires $\Omega(\usize^{1-1/(k+1)})$ samples.
\end{theorem}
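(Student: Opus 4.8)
The plan is to reduce $\eps\usize$-additive sum estimation to a distribution-distinguishing task and then quote the moment-matching lower bound of Raskhodnikova et al.~\cite{raskhodnikova2009strong}. For each fixed $k$ I would construct two distributions $D_1,D_2$, on domains of sizes $n_1$ and $n_2$ with $n_1=(1+\Theta_k(\gamma^k))\,n_2$, such that (i) both are pointwise $\gamma$-close to uniform on their domains and (ii) their first $k$ frequency moments $F_p(D)=\sum_x D(x)^p$, $p=1,\dots,k$, agree. By the Raskhodnikova et al.\ framework, telling two such moment-matched distributions apart requires $\Omega(M^{1-1/(k+1)})$ samples, where $M=\Theta_k(n_1)=\Theta_k(n_2)$. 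I would then take $\usize=n_1+n_2$, let the nominal distribution be uniform over $[\usize]$, and design a $0/1$ vector $v$ and a sampling distribution pointwise $\gamma$-close to uniform so that $\lVert v\rVert_1=n_1$ in the ``$D_1$ world'' and $\lVert v\rVert_1=n_2$ in the ``$D_2$ world''; choosing the absolute constant $c_k$ small enough that $\eps\le c_k\gamma^k$ forces $|n_1-n_2|>2\eps\usize$, any $A$ with additive error $\eps\usize$ separates the two worlds and hence distinguishes $D_1$ from $D_2$, so $A$ needs $\Omega(\usize^{1-1/(k+1)})$ samples.

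The combinatorial heart is the construction of $D_1,D_2$. I would restrict all point probabilities to $\{(1+i\gamma)/n:0\le i\le k\}$, write $a_i$ (resp.\ $b_i$) for the multiplicity of probability $(1+i\gamma)/n$ in $D_1$ (resp.\ $D_2$), and note that matching $F_p$ for $p=1,\dots,k$ is equivalent to $\sum_i(a_i-b_i)(1+i\gamma)^p=0$ for $p=1,\dots,k$, while the support gap is $\sum_i(a_i-b_i)$. Using that the $k$-th finite difference kills low-degree polynomials, i.e.\ $\sum_{i=0}^k(-1)^i\binom{k}{i}P(i)=0$ whenever $\deg P<k$, I set $c_i:=(-1)^i\binom{k}{i}$ and then twist to $c_i':=c_i/(1+i\gamma)$. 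For $1\le p\le k$ we get $\sum_i c_i'(1+i\gamma)^p=\sum_i c_i(1+i\gamma)^{p-1}=0$ (a polynomial of degree $p-1<k$), while by the Beta-integral identity $\sum_{i=0}^k(-1)^i\binom{k}{i}/(1+i\gamma)=k!\,\gamma^k/\prod_{j=0}^k(1+j\gamma)=\Theta_k(\gamma^k)\ne0$, which is what creates the support gap. After rescaling the $c_i'$ by $\Theta(n)$, adding a common ``padding'' multiplicity at probability $1/n$ to both distributions, and setting $a_i:=(\text{padding})+\max(c_i',0)$ and $b_i:=(\text{padding})+\max(-c_i',0)$ (so $a_i-b_i\propto c_i'$ and the paddings cancel), a routine binomial-sum estimate gives $\sum a_i,\sum b_i=\Theta_k(n)$ (so $\usize=\Theta_k(n)$ and $n_1/n_2=1+\Theta_k(\gamma^k)$) and verifies that each resulting point probability lies within a $1\pm\Theta_k(\gamma)$ factor of uniform on the corresponding domain; replacing $\gamma$ by $\gamma/\Theta(k)$ throughout makes this $1\pm\gamma$ and only changes $c_k$.

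The step I expect to be the main obstacle is the reduction itself, i.e., turning samples from the unknown $D\in\{D_1,D_2\}$ (together with self-generated samples from the known ``other'' distribution) into a bona fide sum-estimation instance. The delicate requirement is that the effective sampling distribution $\calP^\ast$ be genuinely pointwise $\gamma$-close to the uniform distribution on \emph{all} of $[\usize]$ --- every coordinate must receive probability in $[(1-\gamma)/\usize,(1+\gamma)/\usize]$, with no coordinate allowed zero or near-zero mass --- while the induced sum still differs by $|n_1-n_2|=\Theta_k(\gamma^k\usize)$ between the two worlds; it is exactly this pointwise constraint, together with $\usize=n_1+n_2$, that pins the admissible support gap at $\Theta(\gamma^k)$ and hence dictates the threshold $\eps\le c_k\gamma^k$. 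One has to place $D$ on a block of $[\usize]$ and a complementary (known, nearly uniform) distribution on the rest, choose the $0/1$ labels consistently --- the block carrying the real $D$-samples being labelled $1$ --- and check that the resulting stream has exactly the law of i.i.d.\ draws from a single pointwise-$\gamma$-close $\calP^\ast$ regardless of which of $D_1,D_2$ was given. Once both the gadget and this reduction are in hand, composing with the $\Omega(M^{1-1/(k+1)})$ distinguishing bound of~\cite{raskhodnikova2009strong} yields the theorem.
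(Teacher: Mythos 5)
Your proposal is correct and follows essentially the same route as the paper: the same moment-matching construction (finite differences of $\binom{k}{i}$ twisted by $1/(1+i\gamma)$, with the support gap computed via the identity $\sum_{i=0}^k(-1)^i\binom{k}{i}/(1+i\gamma)=k!\,\gamma^k/\prod_{j=0}^k(1+j\gamma)$, which the paper establishes by induction), the same two-block labeled reduction placing $D_1$ and $D_2$ on complementary parts of $[\usize]$ with a fair coin choosing the block, and the same appeal to the Raskhodnikova et al.\ framework (the paper makes the reduction to uniform, label-invariant algorithms explicit via its Lemma~\ref{lem:wlog-uniform}, which your ``quote the framework'' step implicitly requires).
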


While at a first glance this result might seem contradictory to our upper bound (specifically, Corollary~\ref{cor:zero-one-sum}), it actually reveals the following interesting phenomena. 
Notice that in our upper bound, the complexity depends on $k=\lceil (\log \eps) / \log \gamma \rceil$, so that, e.g., for $\eps=\gamma^k$, the complexity is $O(\usize^{1-\frac{1}{k}})$. Once $\eps$ becomes slightly smaller, i.e., $\eps=c \gamma^k$ for $c$ satisfying $\gamma \leq c < 1$, the complexity of our algorithm abruptly jumps to $O(n^{1-\frac{1}{k+1}})$. The lower bound implies that this increase in complexity is unavoidable for sufficiently small $c$. That is, Theorem~\ref{thm:lb} states that there exists a (sufficiently small) constant $c_k$, such that indeed once $\eps = c_k \cdot \gamma^k$, the required number of samples is $\Omega(n^{1-\frac{1}{k+1}})$. Interestingly, for all $\eps$ satisfying $\gamma^{k+1} \leq \eps \leq c_k \gamma^k$, the asymptotic complexity of sum estimation is the same; the complexity only varies for $\eps$ satisfying $c_k \gamma^k \leq \eps \leq \gamma^k$. Our matching upper and lower bounds demonstrate that the sample complexity's non-uniform dependence on $\eps$ is not an artifact, but captures the true complexity of the problem (up to the dependency on $\gamma^{k/2}$ in the numerator of the upper bound). Note that if the conclusion of Theorem~\ref{thm:lb} held \emph{every} $c<1$, then this would capture the right dependency on $n$ for \emph{all} possible ranges of $\gamma$. Since we only prove the theorem for a sufficiently small $c_k$, it might be the case that for values $\eps$ that are not too much smaller than $\gamma^k$, the optimal dependency on $\usize$ is lower than our stated upper bound. Nonetheless, our upper and lower bounds match (up to constant factors) for all $\eps$ satisfying $\gamma^{k+1} \leq \eps \leq c_k \gamma^k$.

As described in Section~\ref{sec:technical-overview}, the main technical ingredient in the proof of the theorem is in describing two distributions $D_1$ and $D_2$ over ranges $[n_1], [n_2]$,  respectively, such that $D_1$ and $D_2$ are pointwise $\gamma$-close to uniform,  $n_1=(1+\Theta(\gamma)^k)n_2$, and $D_1$ and $D_2$ have matching frequency moments $1$ through $k$.
Given these distributions  we  rely on the framework for proving lower bounds by Raskhodnikova et al.~\cite{raskhodnikova2009strong}, which states that any uniform algorithm that distinguishes two random variables with matching frequency moments $1$ through $k$ must perform $\Omega(n^{1-1/(k+1)})$ many samples.

In order to simplify our construction and its analysis, we prove the lower bound for \emph{uniform algorithms}. Here, a uniform algorithm is an algorithm whose output depends only on the ``collision statistics'' of the samples---i.e., the number of collisions involving each sample, and not the identities of the samples themselves.

\begin{definition}[Uniform algorithm. Definition 3.2 in~\cite{raskhodnikova2009strong}]
  An algorithm is \emph{uniform} if it samples indices $i_1, \cdots, i_m$ independently with replacement and its output is uniquely determined by (i) the value of the items $x_{i_j}$ and (ii) the set of collisions, where two indices $j$ and $j'$ \emph{collide} if $i_{j} = i_{j'}$.
\end{definition}

In particular, a uniform algorithm's output does \emph{not} depend on the sampled indices themselves. The following lemma asserts that our restriction to uniform algorithms is without loss of generality.

\begin{lemma}[cf.~Theorem~11.12 in~\cite{goldreich2017introduction}]
  \label{lem:wlog-uniform}
  Suppose there exists an algorithm $A$ such that for any $v \in \set{0, 1}^\usize$ $A$ uses samples from $\calP$ and returns an estimate of $\lVert v \rVert_1 = \sum_i v_i$ within additive error $\eps \usize$ with probability $2/3$ using $s$ samples in expectation. Then there exists a uniform algorithm $A'$ that achieves the same approximation guarantee using $s$ samples in expectation.
\end{lemma}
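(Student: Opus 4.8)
The plan is to prove Lemma~\ref{lem:wlog-uniform} by a standard symmetrization argument: given an arbitrary algorithm $A$, construct $A'$ by first applying a uniformly random relabeling of the universe $[\usize]$ before feeding samples to $A$, and show that this does not change the approximation guarantee while making the behavior depend only on the collision statistics and the values. The crucial structural observation is that $v \in \set{0,1}^\usize$ partitions $[\usize]$ into the set $S_0$ of zero-coordinates and the set $S_1$ of one-coordinates, and $\calP$ is uniform, so the only information carried by a sample beyond ``which block $S_0$ or $S_1$ it landed in'' is its raw identity, which is irrelevant to $\lVert v\rVert_1$.

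Concretely, I would proceed as follows. First I would recall (as the cited Theorem~11.12 in~\cite{goldreich2017introduction}, or Definition~3.2 and surrounding discussion in~\cite{raskhodnikova2009strong}) that for label-invariant problems one may assume the algorithm is ``label-invariant'' in the sense that permuting the names of universe elements permutes the input transcript correspondingly. Since estimating $\lVert v\rVert_1$ for $\calP$ uniform is such a problem --- applying a permutation $\pi$ of $[\usize]$ replaces $(v,\calP)$ by $(v\circ\pi^{-1},\calP)$, which has the same answer --- I can define $A'$ to first draw $\pi$ uniformly at random from the symmetric group $S_\usize$, then run $A$ on the permuted instance, i.e., on the sequence $(\pi(i_1),\dots,\pi(i_m))$ together with the values $(x_{i_1},\dots,x_{i_m})$ (the values are unchanged since $x_{i_j}=v_{i_j}=v\circ\pi^{-1}(\pi(i_j))$). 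Since the instance $(v\circ\pi^{-1},\calP)$ still satisfies the hypotheses of the lemma and has answer $\lVert v\rVert_1$, $A$ outputs an $\eps\usize$-additive estimate with probability $2/3$ over its own coins and the samples; averaging over $\pi$ preserves this, and the sample count is unchanged (it is $s$ in expectation regardless of $\pi$). Next I would argue that $A'$ is uniform in the sense of the Definition: conditioned on the realized collision pattern among the $m$ sample slots and on the realized value vector $(x_{i_1},\dots,x_{i_m})$, the random relabeling makes the distribution of the transcript $(\pi(i_1),\dots,\pi(i_m))$ the same for every underlying sample sequence with that collision pattern and value vector --- a uniformly random injection-like assignment of distinct labels to the distinct sampled indices, with labels drawn consistently from $S_0$ or $S_1$ according to the values. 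Hence one may take $A'$ to be the induced algorithm that, given only the collision statistics and the $x_{i_j}$'s, simulates this distribution over transcripts internally and calls $A$; its output distribution equals that of the symmetrized $A$, so it meets the approximation guarantee with the same expected sample complexity.

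The step I expect to require the most care is verifying that conditioning on the collision statistics and the values leaves a transcript distribution that is genuinely independent of the ``hidden'' identities --- i.e., that two sample sequences with identical collision patterns and identical value vectors induce identical conditional distributions of $(\pi(i_1),\dots,\pi(i_m))$. This is where the uniformity of $\calP$ (so that all elements of $S_0$, and all of $S_1$, are exchangeable) and the $\set{0,1}$-valuedness of $v$ (so the values only record the block) are used: the random permutation acts transitively on sequences of distinct representatives within each block of a given size, so the conditional law depends only on the block sizes $\abs{S_0}$, $\abs{S_1}$ and the combinatorial type, not on which particular indices were seen. Everything else --- preservation of correctness probability under averaging over $\pi$, preservation of expected sample count, and the reduction from ``label-invariant algorithm'' to ``collision-statistics-only algorithm'' --- is routine, and I would simply cite~\cite{goldreich2017introduction,raskhodnikova2009strong} for the general principle rather than reproving it.
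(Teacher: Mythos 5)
Your proposal is correct and takes essentially the same route as the paper: the paper also obtains $A'$ by applying a uniformly random permutation of the indices of $v$ (exploiting that $\lVert v\rVert_1$ is label-invariant) and defers the remaining details to Goldreich's Theorem~11.12. Your additional verification that the conditional transcript distribution given the collision pattern and values is identity-independent is exactly the step the paper delegates to the citation.
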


The proof of Lemma~\ref{lem:wlog-uniform} is essentially the same as that of Goldreich's Theorem~11.12 in~\cite{goldreich2017introduction}. The key idea is that $\sum_i v_i$ is a ``label invariant'' in the sense that it is unaffected by any permutation of the indices of $v$. Thus, given an algorithm $A$ as in the hypothesis of Lemma~\ref{lem:wlog-uniform}, we can obtain uniform algorithm $A'$ with the same approximation guarantee by simply choosing a uniformly random permutation of the indices of $v$, then using the permuted indices of $v$ as inputs for $A$. See Theorem~11.12 in~\cite{goldreich2017introduction} for details.

Finally, our lower bound argument requires the following result that is a direct consequence of the work of Raskhodnikova et al.~\cite{raskhodnikova2009strong}.



\begin{theorem}[Consequence of Lemma~5.3 and~Corollary 5.7 from \cite{raskhodnikova2009strong}]
  \label{thm:freq-moments}
Let $D_1$ and $D_2$ be distributions over positive integers $b_1 < \ldots < b_{\ell}$, that have matching frequency moments $1$ through $k$.  Then for any uniform algorithm $A$ with sample complexity $s$ that distinguishes $D_1$ and $D_2$ with high constant probability, $s = \Omega(n^{1-1/(k+1)})$.
\end{theorem}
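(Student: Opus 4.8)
The plan is to derive Theorem~\ref{thm:freq-moments} directly from the two cited results of Raskhodnikova et al.~\cite{raskhodnikova2009strong}, the bulk of the work being to line up our notation with theirs. First I would record the observation that makes ``uniform algorithm'' the right hypothesis: such an algorithm never inspects the identities of its samples, only their collision pattern, so after drawing $s$ samples from a distribution $D$ everything it outputs is a (possibly randomized) function of the \emph{fingerprint} of those samples --- the histogram recording, for each $t\ge 1$, how many domain elements were sampled exactly $t$ times (for distinguishing two unlabeled distributions this is the relevant sufficient statistic). Hence, if $A$ distinguishes $D_1$ from $D_2$ with constant advantage using $s$ samples, the two induced fingerprint distributions must be at total variation distance $\Omega(1)$. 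At this point I would also dispose of two routine reductions: passing from expected to worst-case sample complexity by a Markov truncation, and passing from a fixed budget of $s$ samples to a Poissonized budget $\mathrm{Poi}(s/2)$ (a fixed-$s$ algorithm simulates the latter with failure probability $o(1)$), the latter because the estimates in \cite{raskhodnikova2009strong} are cleanest under Poissonization.

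Next I would invoke Lemma~5.3 of \cite{raskhodnikova2009strong}, which bounds the total variation distance between the fingerprint distributions of $s$ (Poissonized) samples from two distributions by a discrepancy functional depending only on their ``profiles'' --- in our language, on the multiplicity values $b_1<\dots<b_\ell$ together with the masses $D_1,D_2$ place on them --- and whose order-$p$ contribution is controlled by $s^p$ times the gap between the $p$th frequency moments $\sum_x D_1(x)^p$ and $\sum_x D_2(x)^p$. The substantive step here is the bookkeeping: one must check that a distribution over the integers $b_1<\dots<b_\ell$ is a ``histogram'' in the sense of \cite{raskhodnikova2009strong}, that our $\sum_x D(x)^p$ is exactly their $p$th moment functional, and that the parameter ``$n$'' in the eventual bound is their domain-size parameter (which is instantiated as $\Theta(n)=\Theta(n_1+n_2)$ when this theorem is used inside the proof of Theorem~\ref{thm:lb}). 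Once the hypothesis ``matching frequency moments $1$ through $k$'' is fed in, the first $k$ terms of the discrepancy functional vanish, the leading surviving term is of order $k+1$, and Corollary~5.7 of \cite{raskhodnikova2009strong} is exactly the assertion that the resulting total variation distance is $o(1)$ whenever $s=o(n^{1-1/(k+1)})$.

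Combining the two: if $s=o(n^{1-1/(k+1)})$ then, by Lemma~5.3 and Corollary~5.7, the fingerprint distributions of $s$ samples from $D_1$ and $D_2$ are $o(1)$-close, so by the first paragraph no uniform algorithm can distinguish them with constant advantage; contrapositively, distinguishing requires $s=\Omega(n^{1-1/(k+1)})$. I expect the only genuine obstacle to be the translation in the second paragraph: Lemma~5.3 and Corollary~5.7 are stated in \cite{raskhodnikova2009strong} in the vocabulary of support-size and distinct-elements estimation, so care is needed to match their histogram/fingerprint conventions to ours and, in particular, to confirm that the quantity we call the $p$th frequency moment is precisely the one whose cancellation drives their bound and that their ``$n$'' is the quantity we want in the conclusion. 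The fingerprint reduction, the Poissonization, and the Markov truncation are all standard.
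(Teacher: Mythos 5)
Your proposal matches the paper's treatment exactly: the paper offers no proof of this theorem at all, stating it purely as a black-box consequence of Lemma~5.3 and Corollary~5.7 of Raskhodnikova et al., which is precisely the route you take. Your reconstruction of how that citation unwinds (the fingerprint as a sufficient statistic for uniform algorithms, Poissonization, cancellation of the first $k$ moment terms, and the $n^{1-1/(k+1)}$ threshold from Corollary~5.7) is consistent with the cited source, and the translation issues you flag --- matching their histogram/frequency-variable conventions and identifying their $n$ with $\Theta(n_1+n_2)$ in the application --- are indeed the only substantive work left.
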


Our main argument for the lower bound applies Theorem~\ref{thm:freq-moments} in conjunction with a reduction from distinguishing $D_1$ and $D_2$ to sum estimation. The main technical ingredient is the following lemma, which asserts the existence of suitable distributions $D_1$ and $D_2$.

\begin{lemma}\label{lem:matching-freq-moments}
For every positive integer $k$ and sufficiently large integer $n$, there exist two distributions $D_1, D_2$ over $[n_1]$ and $[n_2]$ (respectively) satisfying $n _1 =(1+\Theta(\gamma)^k)n$, and $n_2 = n$ such that $p_{D_j}(i)\in (1\pm \gamma)\frac{1}{n}$ for $j\in\{1,2\}$ and the following holds. For all $\ell \in \{1, 2, \dots, k\}$, it holds that
\[
\sum_{i=1}^{n_1} (p_{D_1}(i))^\ell =\sum_{i=1}^{n_2} (p_{D_2}(i))^\ell.
\]
In particular, there exists an absolute constant $c_k$ such that for sufficiently large $n$, $n_2 \geq (1 + c_k \gamma^k) n_1$ and the above conclusion holds.
\end{lemma}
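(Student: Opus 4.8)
The plan is to follow the combinatorial blueprint sketched in the technical overview, making precise the finite-difference construction that produces two near-uniform distributions with matching low-order frequency moments but differing support sizes. To keep the presentation clean I would first rescale so that all probability masses lie in the set $\{\tfrac{1}{n}, \tfrac{1+\delta}{n}, \dots, \tfrac{1+k\delta}{n}\}$ with $\delta = \gamma/k$; this is legitimate since such distributions are automatically pointwise $(k\delta)$-close $= \gamma$-close to uniform. The key algebraic object is the $k$-th finite difference operator: for any polynomial $P$ of degree $< k$, the iterated difference $\sum_{j=0}^{k} (-1)^{j} \binom{k}{j} P(j) = 0$. Setting $d_i = (-1)^i \binom{k}{i}$ for $i = 0, 1, \dots, k$, we get $\sum_{i=0}^k d_i\, i^{\ell} = 0$ for all $0 \le \ell < k$, but $\sum_i d_i = 0$ as well, which is why the naive choice $c_i = d_i$ fails to separate support sizes.

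The fix, exactly as in the overview, is to divide out one power: set $c_i = d_i / (1 + i\delta)$. Then for every $1 \le \ell \le k$ we have $\sum_{i=0}^k c_i (1 + i\delta)^{\ell} = \sum_i d_i (1+i\delta)^{\ell-1}$, and since $(1+i\delta)^{\ell - 1}$ is a polynomial in $i$ of degree $\ell - 1 \le k-1$, this sum vanishes. On the other hand $\sum_i c_i = \sum_i d_i/(1+i\delta)$ is \emph{not} of the form $\sum_i d_i P(i)$ for a polynomial $P$ of degree $< k$ — it is the $k$-th finite difference of the function $x \mapsto 1/(1+x\delta)$ evaluated at $0$, which by the standard finite-difference/derivative estimate equals $\frac{d^k}{dx^k}\big[(1+x\delta)^{-1}\big]\big|_{x=\xi} \cdot \frac{1}{k!} \cdot k!$ for some $\xi \in (0,k)$, i.e. it is $\pm\, k!\,\delta^k (1+\xi\delta)^{-(k+1)} = \Theta(\delta^k) = \Theta(\gamma/k)^k = \Theta(\gamma)^k$ up to the $k$-dependent constant absorbed into $\Theta$. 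I would write this out via the exact formula $\sum_{i=0}^k (-1)^i\binom{k}{i}\frac{1}{1+i\delta} = \frac{k!\,\delta^k}{\prod_{i=0}^k (1+i\delta)}$ (a partial-fractions identity), which makes the sign and magnitude completely explicit and avoids appealing to a mean-value theorem.

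With the signed coefficients $c_i$ in hand, I scale them by a positive factor $\lambda$ chosen so that $\lambda \sum_i c_i = \Theta(\gamma)^k \cdot n$ — concretely $\lambda = \Theta(n)$ since $\sum_i c_i = \Theta(\gamma)^k$ up to $k$-constants — and then define the multiplicities $a_i = \max(\lambda c_i, 0)$, $b_i = \max(-\lambda c_i, 0)$, so $a_i - b_i = \lambda c_i$, every $a_i, b_i \ge 0$, and $\sum a_i - \sum b_i = \lambda\sum c_i = \Theta(\gamma)^k n$. Distribution $D_1$ places $a_i$ atoms of mass $\tfrac{1+i\delta}{n}$ and $D_2$ places $b_i$ atoms of mass $\tfrac{1+i\delta}{n}$, padded with atoms of mass exactly $\tfrac1n$ so that both become genuine probability distributions of the required total masses; the matching-moments identity $\sum c_i(1+i\delta)^\ell = 0$ transfers directly to $\sum_i a_i (\tfrac{1+i\delta}{n})^\ell = \sum_i b_i (\tfrac{1+i\delta}{n})^\ell$ after accounting for the $\tfrac1n$-padding atoms (which contribute equally once we fix the total support sizes $n_1, n_2$ appropriately — this bookkeeping is the fiddly part). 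The support sizes are $n_1 = \sum a_i + (\text{padding})$ and $n_2 = \sum b_i + (\text{padding})$; I normalize so $n_2 = n$ and then $n_1 = (1 + \Theta(\gamma)^k)n$, which also forces both $\sum a_i$ and $\sum b_i$ to be $\Theta(n)$ (this is the ``careful combinatorics'' claim and needs the explicit magnitude bounds on the $c_i$, e.g. $|c_i| \le \binom{k}{i}$, to show neither side collapses to $o(n)$).

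The main obstacle is the last point: ensuring \emph{simultaneously} that (i) total masses sum to $1$ on both sides, (ii) all $p_{D_j}(i) \in (1\pm\gamma)\tfrac1n$, (iii) the first $k$ frequency moments match \emph{including} the contribution of the $\tfrac1n$-padding atoms, and (iv) $n_1/n_2 = 1 + \Theta(\gamma)^k$. Constraints (i) and (iii) are both linear and there are enough degrees of freedom (the scale $\lambda$ and the number of padding atoms on each side) to satisfy them, but one must check that enforcing the probability-sum-to-one constraint does not destroy the moment match — the clean way is to observe that adding a mass-$\tfrac1n$ atom changes the $\ell$-th moment by exactly $n^{-\ell}$, so if $D_1$ has $t_1$ padding atoms and $D_2$ has $t_2$, the moment-matching condition becomes $\sum_i a_i(1+i\delta)^\ell + t_1 = \sum_i b_i(1+i\delta)^\ell + t_2$ (after multiplying by $n^\ell$), i.e. $t_1 - t_2 = -\lambda\sum_i c_i(1+i\delta)^\ell = 0$ for all $\ell \ge 1$; combined with the total-probability constraint this pins down $t_1, t_2$ and one verifies they are nonnegative and that the resulting $n_1, n_2$ have the claimed ratio. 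Once this is set up the final sentence of the lemma follows by taking $c_k$ to be the implied constant in $\Theta(\gamma)^k$ (possibly after shrinking it, and using $\gamma < 1$ so that $\Theta(\gamma)^k \ge c_k\gamma^k$).
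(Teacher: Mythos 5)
Your proposal is correct and follows essentially the same route as the paper: the $k$-th finite difference annihilating polynomials of degree $<k$, multiplicities proportional to $\binom{k}{i}/(1+i\gamma/k)$ obtained by dividing out one factor of $(1+i\delta)$, the sign split into even/odd $i$, and the exact partial-fractions identity $\sum_{i=0}^k(-1)^i\binom{k}{i}/(1+i\delta)=k!\,\delta^k/\prod_{i=0}^k(1+i\delta)$ to lower-bound the support-size gap. The padding bookkeeping you flag as the main obstacle is in fact a non-issue: with the paper's normalization the mass fractions $\binom{k}{i}/2^{k-1}$ over even (resp.\ odd) $i$ each sum to exactly $1$, so both sides are already probability distributions and your $t_1=t_2=0$.
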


Before proving the lemma, we show how the lemma implies Theorem~\ref{thm:lb}.

\begin{proof}[Proof of Theorem~\ref{thm:lb}]
The theorem follows from Theorem~\ref{thm:freq-moments} together with Lemma~\ref{lem:matching-freq-moments}. Let $N = n_1+n_2$, where $n_1=(1+\Theta(\gamma)^k) n_2$ as in the conclusion of Lemma~\ref{lem:matching-freq-moments}, and consider distinguishing between two possible outcomes $\mathcal{O}_1$ and $\mathcal{O}_2$. 

In the first outcome $\mathcal{O}_1$, let $S = \{1, 2, \dots, n_1\} \subseteq [N]$ and $T = \{t_1, \dots, t_{n_2}\} := [N] \backslash S$. The distribution $\mathcal{Q}$ will be as follows. With exactly $1/2$ probability, we choose $S$: if so, we then choose a sample $i \sim D_1$, which will be in $[n_1]$, and output $i$. Otherwise, we choose $T$: we then choose a sample $i \sim D_2$, which will be in $[n_2]$, and then output $t_i$. Here, $D_1, D_2$ are the distributions from Lemma~\ref{lem:matching-freq-moments}. Finally, we let $v_i = 1$ if $i \in S$ and $v_i = 0$ if $i \in T$.

The second outcome $\mathcal{O}_2$ is similar but ``flipped''. Now, we let $S = \{1, 2,  \dots, n_2\} \subseteq [N]$, and $T = \{t_1, \dots, t_{n_1}\} := [N] \backslash S$. With exactly $1/2$ probability, we choose $S$: if so, we then choose a sample $i \sim D_2$, and output $i$. Otherwise, we choose $T$: we then choose a sample $i \sim D_1$, and then output $t_i$. Finally, we let $v_i = 1$ if $i \in S$ and $v_i = 0$ if $i \in T$.

Under $\mathcal{O}_1$, we have that $\sum v_i$ always equals $n_1$, whereas under $\mathcal{O}_2$, we have that $\sum v_i$ is always $n_2$. In addition, since both $n_1$ and $n_2$ are $\frac{N}{2} \cdot (1 \pm O(\gamma))$, and since $D_1$ and $D_2$ are $\gamma$-pointwise close to uniform, the distribution $\mathcal{Q}$ that we sample from in either case is $O(\gamma)$-pointwise close to uniform. So, we may assume that $\mathcal{P}$ is uniform over $[N]$ in either case.

Now, assume that there exists a uniform algorithm\footnote{Again, the assumption that $A$ is uniform is without loss of generality by Lemma~\ref{lem:wlog-uniform}. For our construction, however, the two scenarios $\mathcal{O}_1$ and $\mathcal{O}_2$ can be distinguished by a non-uniform algorithm using $O(\gamma^k)$ samples. Indeed, the two scenarios are distinguished by seeing any value $v_i$ with $n_2 < i \leq n_1$. Following the proof of Lemma~\ref{lem:wlog-uniform} (cf.~Theorem~11.12 in~\cite{goldreich2017introduction}), the scenarios are indistinguishable to even a non-uniform algorithm we we replace $S$ and $T$ with randomly chosen complementary subsets of $[N]$.} $A$ that draws samples $(i, v_i)$ either from outcome $\mathcal{O}_1$ or outcome $\mathcal{O}_2$ and with probability at least $2/3$, computes an estimate of $\sum_i v_i$ up to additive error $c_k \gamma^k \usize / 5$, where $c_k$ is as in the second conclusion of Lemma~\ref{lem:matching-freq-moments}. Observe that when the error bound on $A$ is satisfied (which occurs with probability at least $2/3$), $A$'s output distinguishes scenarios $\mathcal{O}_1$ and $\mathcal{O}_2$.

Finally, we observe that distinguishing $\mathcal{O}_1$ from $\mathcal{O}_2$ is sufficient to distinguish the distributions $D_1$ and $D_2$. Indeed, under scenario $\mathcal{O}_1$, the $1$-values and sampled from $D_1$, while the $0$-values are sampled from $D_2$, while the roles are reversed in $\mathcal{O}_2$. Thus, the output of $A$ suffices to distinguish $D_1$ and $D_2$. Since $A$ uses $s$ samples in expectation, Theorem~\ref{thm:freq-moments} and Lemma~\ref{lem:matching-freq-moments} imply that $s = \Omega(n^{1-1/(k+1)})$, as desired.
%
\end{proof}

We now conclude by proving our main technical lemma.

\begin{proof}[Proof of Lemma~\ref{lem:matching-freq-moments}]

    First, we note that 
\begin{align}
    \sum_{i = 0}^{k} (-1)^i {k \choose i} {i \choose r} &= \sum_{i = r}^{k} (-1)^i {k \choose i} {i \choose r} \\
    &= \sum_{i = r}^{k} (-1)^i \cdot \frac{k!}{(k-i)! (i-r)! r!} \\
    &= \sum_{i = r}^{k} (-1)^i \cdot {k \choose r} {k-r \choose i-r} \\
    &= (-1)^r {k \choose r} \cdot \sum_{j = 0}^{k-r} (-1)^j {k-r \choose j}.
\end{align}
    The last line follows by setting $j = i-r$. Now, note that the summation in the last line equals $(1-1)^{k-r} = 0$ if $k > r$, and equals $1$ if $k = r$. So, this means that $\sum_{i = 0}^{k} (-1)^i {k \choose i} {i \choose r} = 0$ for all $0 \le r < k$.
    
    Next, note that ${i \choose r} = \frac{i(i-1) \cdots (i-r+1)}{r!}$ for all integers $i \ge 0$. This is a degree-$r$ polynomial in $i$. From this observation, it is well-known that every degree at most $k-1$ polynomial in $i$ can be written as a linear combination of ${i \choose 0}, \dots, {i \choose k-1}$. Therefore, for any polynomial $P$ of degree at most $k-1$, $\sum_{i = 0}^{k} (-1)^i {k \choose i} \cdot P(i) = 0$.
    
    Now, we let the distribution $D_1$ have exactly a $\frac{{k \choose i}}{2^{k-1}}$ fraction of its mass consisting of items each with probability $\left(1 + \frac{\gamma \cdot i}{k}\right) \cdot \frac{1}{n_0}$, for each \emph{even} integer $0 \le i \le k$. Here, $n_0$ will be an integer chosen later. Note this means it must have $n_0 \cdot \frac{{k \choose i}}{2^{k-1} \cdot (1 + \frac{\gamma \cdot i}{k})}$ points with mass $\left(1 + \frac{\gamma \cdot i}{k}\right) \cdot \frac{1}{n_0}$ for each \emph{even} integer $0 \le i \le k$. Likewise, we let the distribution $D_2$ have exactly a $\frac{{k \choose i}}{2^{k-1}}$ fraction of its mass consisting of items each with probability $\left(1 + \frac{\gamma \cdot i}{k}\right) \cdot \frac{1}{n_0}$, for each \emph{odd} integer $0 \le i \le k$. Note that the total fraction of mass for both $D_1$ and $D_2$ is clearly $1$.
    
    First, we note that for any $1 \le \ell \le k$, 
\begin{align}
    &\hspace{0.5cm}\sum_{i = 1}^{n_1} (p_{D_1}(i))^{\ell} - \sum_{i = 1}^{n_1} (p_{D_2}(i))^{\ell} \\
    &= \sum_{\substack{i = 0 \\ i \text{ even}}}^{k} n_0 \cdot \frac{{k \choose i}}{2^{k-1} \cdot (1 + \frac{\gamma \cdot i}{k})} \cdot \left(\left(1 + \frac{\gamma \cdot i}{k}\right) \cdot \frac{1}{n_0}\right)^\ell - \sum_{\substack{i = 0 \\ i \text{ odd}}}^{k} n_0 \cdot \frac{{k \choose i}}{2^{k-1} \cdot (1 + \frac{\gamma \cdot i}{k})} \cdot \left(\left(1 + \frac{\gamma \cdot i}{k}\right) \cdot \frac{1}{n_0}\right)^\ell \\
    &= \frac{1}{n_0^{\ell-1} \cdot 2^{k-1}} \cdot \sum_{i = 0}^{k} (-1)^i {k \choose i} \cdot \left(1 + \frac{\gamma \cdot i}{k}\right)^{\ell-1}.
\end{align}
    By letting $P(i)$ be the polynomial $\left(1 + \frac{\gamma \cdot i}{k}\right)^{\ell-1}$, we have that $P(i)$ has degree at most $k-1$, so this equals $0$, as desired.
    
    Finally, we look at the difference $n_1-n_2$, i.e., the difference in support size between $D_1$ and $D_2$. This simply equals
\begin{align}
    \sum_{\substack{i = 0 \\ i \text{ even}}}^{k} n_0 \cdot \frac{{k \choose i}}{2^{k-1} \cdot (1 + \frac{\gamma \cdot i}{k})} - \sum_{\substack{i = 0 \\ i \text{ odd}}}^{k} n_0 \cdot \frac{{k \choose i}}{2^{k-1} \cdot (1 + \frac{\gamma \cdot i}{k})} &= \frac{n_0}{2^{k-1}} \cdot \sum_{i = 0}^{k} (-1)^i \cdot \frac{{k \choose i}}{1 + \frac{\gamma}{k} \cdot i}.
\end{align}

    We now inductively prove (by inducting on $k \ge 1$) that $\sum_{i = 0}^{k} (-1)^i \cdot \frac{{k \choose i}}{a + \gamma \cdot i} = \frac{k! \cdot \gamma^k}{a(a+\gamma) \cdots (a+k \gamma)}$ for any real numbers $a, \gamma$.
    For $k = 1$, we have that $\sum_{i = 0}^{k} (-1)^i \cdot \frac{{k \choose i}}{a + \gamma \cdot i} = \frac{1}{a} - \frac{1}{a+\gamma} = \frac{\gamma}{a(a+\gamma)}$. For general $k$, we can write 
\begin{align}
    \sum_{i = 0}^{k} (-1)^i \cdot \frac{{k \choose i}}{a + \gamma \cdot i} &= \sum_{i = 0}^{k} (-1)^i \cdot \frac{{k-1 \choose i-1} + {k-1 \choose i}}{a + \gamma \cdot i} \\
    &= \sum_{i = 0}^{k-1} (-1)^i \cdot \frac{{k-1 \choose i}}{a + \gamma \cdot i} + \sum_{i = 1}^{k} (-1)^i \cdot \frac{{k-1 \choose i-1}}{a + \gamma \cdot i} \\
    &= \sum_{i = 0}^{k-1} (-1)^i \cdot \frac{{k-1 \choose i}}{a + \gamma \cdot i} - \sum_{j = 0}^{k-1} (-1)^j \cdot \frac{{k-1 \choose j}}{(a+\gamma) + \gamma \cdot j},
\end{align}
    where we have set $j = i-1$. We can now use the inductive hypothesis on $k-1$ to obtain that this equals
\begin{align}
    & \frac{(k-1)! \cdot \gamma^{k-1}}{a(a+\gamma) \cdots (a+(k-1) \gamma)} - \frac{(k-1)! \cdot \gamma^{k-1}}{(a+\gamma) \cdots (a+(k-1) \gamma) (a + k \gamma)} \\
    &= (k-1)! \cdot \gamma^{k-1} \cdot \frac{(a + k \gamma) - a}{a(a+\gamma) \cdots (a+(k-1) \gamma)(a+k \gamma)} \\
    &= k! \cdot \gamma^k \cdot \frac{1}{a(a+\gamma) \cdots (a+(k-1) \gamma)(a+k \gamma)}.
\end{align}

Therefore, by setting $a = 1$ and replacing $\gamma$ with $\gamma' = \gamma/k$, we have that the difference in support size between $D_1$ and $D_2$ is
\[\frac{n_0}{2^{k-1}} \cdot \frac{k!}{k^k} \cdot \gamma^k \cdot \frac{1}{(1+\gamma/k)(1+2\gamma/k) \cdot (1+\gamma)}.\]
    Assuming that $\gamma \le 1/2$, we can apply Stirling's approximation to obtain that this difference is $n_0 \cdot (\gamma/\Theta(1))^k$.
    
    To finish, we will set $n_0$ appropriately. Note that we wish for $D_2$ to have support size exactly $n$. However, all of the points in $D_2$ has mass between $\frac{1}{n_0}$ and $\frac{1+\gamma}{n_0},$ which means that the support size $n_2$ must be between $\frac{n_0}{1+\gamma}$ and $n_0$. So, we can first set $n_0$, and then choose $n = n_2$ to be $n_0 \cdot \sum_{i = 0, i \text{ odd}}^k \frac{{k \choose i}}{2^{k-1} \cdot (1 + \frac{\gamma \cdot i}{k})},$ which is in the range $\left[\frac{n_0}{1+\gamma}, n_0\right]$, and $n_1$ to be $n_0 \cdot \sum_{i = 0, i \text{ even}}^k \frac{{k \choose i}}{2^{k-1} \cdot (1 + \frac{\gamma \cdot i}{k})}.$ Both $n_1$ and $n = n_2$ are in the range $\left[\frac{n_0}{1+\gamma}, n_0\right]$. Indeed, we will have that $\sum_{i = 1}^{n_1} (p_{D_1}(i))^\ell = \sum_{i = 1}^{n_2} (p_{D_2}(i))^\ell$, and $n_1-n_2 = \Theta(\gamma)^k \cdot n_0 = \Theta(\gamma)^k \cdot n$. In addition, because all of the values $p_{D_j}(i)$ are in the range $\left[\frac{1}{n_0}, \frac{1+\gamma}{n_0}\right]$ for both $j = 1$ and $j = 2$, this means that they are also in the range $\left[\frac{1-\gamma}{n}, \frac{1+\gamma}{n}\right]$, as desired. This completes the proof.
\end{proof}

\section{Proof of \texorpdfstring{\Cref{lem:expression-mean-zero}}{Technical Lemma}}
\label{sec:proof-technical}

The goal of this section is prove \Cref{lem:expression-mean-zero} which will finish the proof of \Cref{lem:estimate-sum}.
{
    \def\thetheorem{\ref{lem:expression-mean-zero}}
    \begin{lemma}
        For all sequences of numbers $(\beta_j)_{j \in [\ssize]}$ and all $\alpha$ the following identity holds:
        \begin{align*}
            \sum_{\substack{I \subseteq [\ssize]\\0 < |I| \le k}} (-1)^{|I| + 1} \frac{\binom{k}{|I|}}{\binom{\ssize}{|I|}} \left(\prod_{j \in I} \beta_j - (1 + \alpha)^h\right)
                = (-1)^{k + 1} \sum_{\substack{I \subseteq [\ssize]\\0 < |I| \le k}} \frac{\binom{k}{|I|}}{\binom{m}{|I|}} \alpha^{k - |I|} \prod_{j \in I} (\beta_j - (1 + \alpha))
        \end{align*}
    \end{lemma}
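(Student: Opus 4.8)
(Here I read $(1+\alpha)^{h}$ in the statement as $(1+\alpha)^{|I|}$, so that the summand attached to $I$ is $\prod_{j\in I}\beta_j - \prod_{j\in I}(1+\alpha)$.) The plan is to prove the identity by passing to the shifted variables $\delta_j := \beta_j - (1+\alpha)$ that already appear on the right-hand side: substitute $\beta_j = (1+\alpha)+\delta_j$, expand everything as a multilinear polynomial in the $\delta_j$, and match the coefficient of each monomial $\prod_{j\in J}\delta_j$ on the two sides.

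First I would expand $\prod_{j\in I}\beta_j = \prod_{j\in I}\bigl((1+\alpha)+\delta_j\bigr) = \sum_{J\subseteq I}(1+\alpha)^{|I|-|J|}\prod_{j\in J}\delta_j$, so that
\[
\prod_{j\in I}\beta_j - (1+\alpha)^{|I|} \;=\; \sum_{\emptyset\neq J\subseteq I}(1+\alpha)^{|I|-|J|}\prod_{j\in J}\delta_j .
\]
Substituting this into the left-hand side and interchanging the two summations, I would group the terms by the ``inner'' set $J$: for each nonempty $J$ with $|J|\le k$, writing $h'=|J|$, its coefficient is $\prod_{j\in J}\delta_j$ times
\[
\sum_{h=h'}^{k}(-1)^{h+1}\frac{\binom{k}{h}}{\binom{\ssize}{h}}\binom{\ssize-h'}{h-h'}(1+\alpha)^{h-h'},
\]
where $\binom{\ssize-h'}{h-h'}$ counts the sets $I$ of size $h$ with $J\subseteq I\subseteq[\ssize]$.

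The key simplification is the binomial identity $\binom{\ssize-h'}{h-h'}/\binom{\ssize}{h} = \binom{h}{h'}/\binom{\ssize}{h'}$ (equivalently $\binom{\ssize-h'}{h-h'}\binom{\ssize}{h'}=\binom{\ssize}{h}\binom{h}{h'}$) together with $\binom{k}{h}\binom{h}{h'}=\binom{k}{h'}\binom{k-h'}{h-h'}$. Substituting $g=h-h'$, these turn the displayed inner sum into
\[
\frac{\binom{k}{h'}}{\binom{\ssize}{h'}}(-1)^{h'+1}\sum_{g=0}^{k-h'}(-1)^{g}\binom{k-h'}{g}(1+\alpha)^{g}
= \frac{\binom{k}{h'}}{\binom{\ssize}{h'}}(-1)^{h'+1}\bigl(1-(1+\alpha)\bigr)^{k-h'}
= (-1)^{k+1}\,\frac{\binom{k}{h'}}{\binom{\ssize}{h'}}\,\alpha^{k-h'},
\]
using the binomial theorem in the middle step and $(-1)^{h'+1}(-1)^{k-h'}=(-1)^{k+1}$ at the end. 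Plugging this back in and recalling $\delta_j=\beta_j-(1+\alpha)$ yields exactly the claimed right-hand side.

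The computation is essentially routine once the substitution is in place; the only part needing care is the binomial-coefficient bookkeeping for the coefficient of $\prod_{j\in J}\delta_j$ — getting the cancellation $\binom{\ssize-h'}{h-h'}/\binom{\ssize}{h}=\binom{h}{h'}/\binom{\ssize}{h'}$ and tracking signs so that the exponent $(-1)^{k+1}$ comes out uniformly, independent of $|J|$, which is precisely what makes the right-hand side collapse into the stated closed form. A quick check at $k=1$ and $k=2$ confirms the signs.
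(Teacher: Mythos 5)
Your proposal is correct and follows essentially the same route as the paper: the shift $\beta_j=(1+\alpha)+\delta_j$ is exactly the paper's Lemma~\ref{lem:prod-mean-zero}, and the subsequent interchange of summation, the count $\binom{\ssize-|J|}{h-|J|}$ of supersets, the trinomial-revision identities, and the binomial-theorem collapse to $(-\alpha)^{k-|J|}$ all match the paper's argument (the paper merely routes the coefficient bookkeeping through $\binom{k}{|I|}/\binom{\ssize}{|I|}=\binom{\ssize-|I|}{k-|I|}/\binom{\ssize}{k}$, which is an equivalent rewriting of your identities). Your reading of the typo $(1+\alpha)^h$ as $(1+\alpha)^{|I|}$ is also the intended one.
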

    \addtocounter{theorem}{-1}
}

First we need a simple lemma.
\begin{lemma}\label{lem:prod-mean-zero}
    For all sequences of numbers $(\beta_j)_{j \in I}$ and all $\alpha$ the following identity holds:
    \begin{align*}
        \prod_{j \in I} \beta_j - (1 + \alpha)^{|I|}
            = \sum_{\emptyset \neq J \subseteq I} (1 + \alpha)^{|I| - |J|} \prod_{j \in J} (\beta_j - (1 + \alpha))
    \end{align*}
\end{lemma}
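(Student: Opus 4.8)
The plan is to prove \Cref{lem:prod-mean-zero} first, and then use it to deduce \Cref{lem:expression-mean-zero} by rearranging the order of summation.

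\textbf{Proof of \Cref{lem:prod-mean-zero}.} The identity is a telescoping/inclusion-exclusion style expansion. The cleanest way I would carry this out is to think of $\beta_j = (1+\alpha) + (\beta_j - (1+\alpha))$ and expand the product $\prod_{j \in I} \beta_j = \prod_{j \in I}\bigl((1+\alpha) + (\beta_j - (1+\alpha))\bigr)$ by the distributive law. Each term in the expansion corresponds to a choice of subset $J \subseteq I$ on which we pick the factor $(\beta_j - (1+\alpha))$, and on the complement $I \setminus J$ we pick the factor $(1+\alpha)$; this gives $\prod_{j \in I}\beta_j = \sum_{J \subseteq I} (1+\alpha)^{|I| - |J|}\prod_{j \in J}(\beta_j - (1+\alpha))$. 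The $J = \emptyset$ term is exactly $(1+\alpha)^{|I|}$, so subtracting it from both sides yields the claimed identity. This step is routine; no obstacle here.

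\textbf{Deducing \Cref{lem:expression-mean-zero}.} Start from the left-hand side $\sum_{\emptyset \ne I \subseteq [m],\, |I| \le k} (-1)^{|I|+1}\frac{\binom{k}{|I|}}{\binom{m}{|I|}}\bigl(\prod_{j\in I}\beta_j - (1+\alpha)^{|I|}\bigr)$ and apply \Cref{lem:prod-mean-zero} to each inner bracket. This rewrites the LHS as a double sum over pairs $(I, J)$ with $\emptyset \ne J \subseteq I \subseteq [m]$ and $|I| \le k$, with summand $(-1)^{|I|+1}\frac{\binom{k}{|I|}}{\binom{m}{|I|}}(1+\alpha)^{|I|-|J|}\prod_{j\in J}(\beta_j - (1+\alpha))$. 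Now swap the order of summation: fix $J$ (with $1 \le |J| \le k$) and sum over all $I$ with $J \subseteq I \subseteq [m]$, $|I| \le k$. Writing $|J| = h$ and $|I| = i$, there are $\binom{m - h}{i - h}$ choices of such $I$ of size $i$, each contributing the same factor $\prod_{j\in J}(\beta_j - (1+\alpha))$ and the same $(1+\alpha)^{i-h}$ up to the combinatorial weights. So the LHS becomes $\sum_{J : 1 \le |J| \le k}\Bigl(\prod_{j\in J}(\beta_j-(1+\alpha))\Bigr)\sum_{i=h}^{k}(-1)^{i+1}\frac{\binom{k}{i}}{\binom{m}{i}}\binom{m-h}{i-h}(1+\alpha)^{i-h}$, where $h = |J|$.

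\textbf{The main obstacle: the inner coefficient sum.} The crux is to show that for each fixed $h$ with $1 \le h \le k$,
\[
\sum_{i=h}^{k}(-1)^{i+1}\frac{\binom{k}{i}}{\binom{m}{i}}\binom{m-h}{i-h}(1+\alpha)^{i-h} = (-1)^{k+1}\frac{\binom{k}{h}}{\binom{m}{h}}\alpha^{k-h},
\]
since matching this against the target RHS (after identifying $J$ with the index set $I$ there and using $|J| = h$) completes the proof. To establish this, I would first simplify the ratio of binomials: $\frac{\binom{k}{i}}{\binom{m}{i}}\binom{m-h}{i-h} = \binom{k}{i}\frac{(m-h)!\,i!\,(m-i)!}{(i-h)!\,(m-h)!\cdot\, m!}\cdot\frac{1}{1}$, which after cancellation becomes $\binom{k}{i}\binom{i}{h}\big/\binom{m}{h}$ (using $\frac{\binom{m-h}{i-h}}{\binom{m}{i}} = \frac{\binom{i}{h}}{\binom{m}{h}}$, a standard trinomial-revision identity). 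Thus the sum reduces, modulo the common factor $\frac{1}{\binom{m}{h}}$, to $\sum_{i=h}^{k}(-1)^{i+1}\binom{k}{i}\binom{i}{h}(1+\alpha)^{i-h}$. Pulling out $\binom{k}{h}$ via $\binom{k}{i}\binom{i}{h} = \binom{k}{h}\binom{k-h}{i-h}$ and substituting $j = i - h$, this is $-(-1)^h\binom{k}{h}\sum_{j=0}^{k-h}(-1)^j\binom{k-h}{j}(1+\alpha)^j = -(-1)^h\binom{k}{h}(1 - (1+\alpha))^{k-h} = (-1)^{h+1}\binom{k}{h}(-\alpha)^{k-h}$. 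Since $(-1)^{h+1}(-1)^{k-h} = (-1)^{k+1}$, this equals $(-1)^{k+1}\binom{k}{h}\alpha^{k-h}$, which is exactly the claimed value after restoring the $\frac{1}{\binom{m}{h}}$ factor. I expect the bookkeeping with the binomial-coefficient identities (the trinomial revision step and the $\binom{k}{i}\binom{i}{h} = \binom{k}{h}\binom{k-h}{i-h}$ absorption) to be the only place requiring care; everything else is a direct binomial-theorem collapse.
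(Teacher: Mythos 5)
Your proof of \Cref{lem:prod-mean-zero} is correct and is essentially identical to the paper's: both write $\beta_j = (1+\alpha) + (\beta_j - (1+\alpha))$, expand the product over subsets $J \subseteq I$, and subtract the $J = \emptyset$ term $(1+\alpha)^{|I|}$. (Your subsequent derivation of \Cref{lem:expression-mean-zero} also mirrors the paper's argument — swap the order of summation, apply trinomial revision, and collapse the inner sum via the binomial theorem to get $\alpha^{k-|J|}$ — so no gaps there either.)
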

\begin{proof}
    Let $\beta_j' := \beta_j-(1+\alpha)$. Then, 
    \begin{align*}
        \prod_{j \in I} \beta_j' - (1 + \alpha)^{|I|}
        &= \prod_{j \in I} (\beta_j' + (1+\alpha)) - (1 + \alpha)^{|I|} \\
        &=\left[\sum_{J \subseteq I} (1 + \alpha)^{|I| - |J|} \prod_{j \in J} \beta_j'\right] - (1+\alpha)^{|I|} \\
        &= \sum_{\emptyset \neq J \subseteq I} (1 + \alpha)^{|I| - |J|} \prod_{j \in J} (\beta_j - (1 + \alpha)). \qedhere
    \end{align*}
\end{proof}

Now we are ready to prove \Cref{lem:expression-mean-zero}.
\begin{proof}[Proof of \Cref{lem:expression-mean-zero}]
        
    We start by using \Cref{lem:prod-mean-zero}
    \begin{align*}
        \sum_{\substack{I \subseteq [\ssize]\\0 < |I| \le k}} (-1)^{|I| + 1} \frac{\binom{k}{|I|}}{\binom{\ssize}{|I|}} \left(\prod_{j \in I} \beta_j - (1 + \alpha)^h\right)
            &= \sum_{\substack{I \subseteq [\ssize]\\0 < |I| \le k}} (-1)^{|I| + 1} \frac{\binom{k}{|I|}}{\binom{\ssize}{|I|}} \sum_{\emptyset \neq J \subseteq I} (1 + \alpha)^{|I| - |J|} \prod_{j \in J} (\beta_j - (1 + \alpha))
    \end{align*}
    We use that $\frac{\binom{k}{|I|}}{\binom{\ssize}{|I|}}
    = \frac{\binom{\ssize - |I|}{k - |I|}}{\binom{\ssize}{k}}$ which follows from the fact that $\binom{\ssize}{k}\binom{k}{|I|} = \binom{\ssize}{|I|, k - |I|, \ssize - k} = \binom{\ssize}{|I|}\binom{\ssize - |I|}{k - |I|}$.
    \begin{align*}
        &\sum_{\substack{I \subseteq [\ssize]\\0 < |I| \le k}} (-1)^{|I| + 1} \frac{\binom{k}{|I|}}{\binom{\ssize}{|I|}} \sum_{\emptyset \neq J \subseteq I} (1 + \alpha)^{|I| - |J|} \prod_{j \in J} (\beta_j - (1 + \alpha))
            \\&= \sum_{\substack{I \subseteq [\ssize]\\0 < |I| \le k}} (-1)^{|I| + 1} \frac{\binom{\ssize - |I|}{k - |I|}}{\binom{\ssize}{k}} \sum_{\emptyset \neq J \subseteq I} (1 + \alpha)^{|I| - |J|} \prod_{j \in J} (\beta_j - (1 + \alpha))
            \\&= \frac{1}{\binom{\ssize}{k}} \sum_{\substack{J \subseteq [\ssize]\\0 < |J| \le k}} \left(\prod_{j \in J} (\beta_j - (1 + \alpha)) \right) \sum_{\substack{I \subseteq [\ssize]\\J \subseteq I, |I| \le k}} (-1)^{|I| + 1} \binom{\ssize - |I|}{k - |I|} (1 + \alpha)^{|I| - |J|}
            \\&= \frac{1}{\binom{\ssize}{k}} \sum_{\substack{J \subseteq [\ssize]\\0 < |J| \le k}} \left( \prod_{j \in J} (\beta_j - (1 + \alpha)) \right) \sum_{h = |J|}^{k} (-1)^{h + 1} \binom{\ssize - |J|}{h - |J|}\binom{\ssize - h}{k - h} (1 + \alpha)^{h - |J|}
    \end{align*}
    Now we use that $\binom{\ssize - h}{k - h} \binom{\ssize - |J|}{h - |J|} = \binom{\ssize - |J|}{k - h, \ssize - k, h - |J|} = \binom{\ssize - |J|}{k - |J|} \binom{k - |J|}{h - |J|}$.
    \begin{align*}
        &\frac{1}{\binom{\ssize}{k}} \sum_{\substack{J \subseteq [\ssize]\\0 < |J| \le k}} \left( \prod_{j \in J} (\beta_j - (1 + \alpha)) \right) \sum_{h = |J|}^{k} (-1)^{h + 1} \binom{\ssize - |J|}{h - |J|}\binom{\ssize - h}{k - h} (1 + \alpha)^{h - |J|} 
        \\&= \frac{1}{\binom{\ssize}{k}} \sum_{\substack{J \subseteq [\ssize]\\0 < |J| \le k}} \left( \prod_{j \in J} (\beta_j - (1 + \alpha)) \right) \sum_{h = |J|}^{k} (-1)^{h + 1} \binom{\ssize - |J|}{k - |J|} \binom{k - |J|}{h - |J|} (1 + \alpha)^{h - |J|} 
        \\&= \frac{1}{\binom{\ssize}{k}} \sum_{\substack{J \subseteq [\ssize]\\0 < |J| \le k}} (-1)^{|J| + 1} \binom{\ssize - |J|}{k - |J|} \left( \prod_{j \in J} (\beta_j - (1 + \alpha)) \right) \sum_{h = 0}^{k - |J|} (-1)^{h}  \binom{k - |J|}{h} (1 + \alpha)^{h}
        \\&= \frac{1}{\binom{\ssize}{k}} \sum_{\substack{J \subseteq [\ssize]\\0 < |J| \le k}} (-1)^{|J| + 1} \binom{\ssize - |J|}{k - |J|} \left( \prod_{j \in J} (\beta_j - (1 + \alpha)) \right) (1 - (1 + \alpha))^{k - |J|}
        \\&= \frac{(-1)^{k + 1}}{\binom{\ssize}{k}} \sum_{\substack{J \subseteq [\ssize]\\0 < |J| \le k}} \binom{\ssize - |J|}{k - |J|} \alpha^{k - |J|} \left( \prod_{j \in J} (\beta_j - (1 + \alpha)) \right)
    \end{align*}
    Finally, we again use that $\frac{\binom{k}{|I|}}{\binom{\ssize}{|I|}}
    = \frac{\binom{\ssize - |I|}{k - |I|}}{\binom{\ssize}{k}}$ to finish the proof.
    \begin{align*}
        \frac{(-1)^{k + 1}}{\binom{\ssize}{k}} \sum_{\substack{J \subseteq [\ssize]\\0 < |J| \le k}} \binom{\ssize - |J|}{k - |J|} \alpha^{k - |J|} \left( \prod_{j \in J} (\beta_j - (1 + \alpha)) \right)
            = (-1)^{k + 1} \sum_{\substack{J \subseteq [\ssize]\\0 < |J| \le k}} \frac{\binom{k}{h}}{\binom{\ssize}{h}} \alpha^{k - |J|} \left( \prod_{j \in J} (\beta_j - (1 + \alpha)) \right)
    \end{align*}
\end{proof}

\bibliographystyle{plainnat}
\bibliography{literature}
\end{document}